\newtheorem{theorem}{Theorem}[section]
\newtheorem{lemma}[theorem]{Lemma}
\newenvironment{proof}{\begin{trivlist}
\item[\hspace{\labelsep}{\bf\noindent Proof: }]}{\qedsymb\end{trivlist}}
\newcommand{\qed}{\hfill\rule{2mm}{2mm}}
\newcommand{\qedsymb}{\hfill{\rule{2mm}{2mm}}}
\newcommand{\remove}[1]{}
\def\N{{\mathbf N}}
\def\R{{\cal R}}
\def\p{{\mathbf{p}}}
\def\pm{{\mathbf{p_{\min}}}}
\def\C{\overline C}
\def\D{\overline D}
\def\paths{{\cal P}}
\begin{document}

\sloppy

\begin{titlepage}

\title{Bicriteria Optimization in Routing Games}

\author{
Costas Busch\\
Computer Science Department\\
Louisiana State University\\
280 Coates Hall\\
Baton Rouge, LA 70803, USA\\
{\sf busch@csc.lsu.edu}
\and
Rajgopal Kannan\\
Computer Science Department\\
Louisiana State University\\
279 Coates Hall\\
Baton Rouge, LA 70803, USA\\
{\sf rkannan@csc.lsu.edu}
}

\date{}

\maketitle

\begin{abstract}
\noindent
Two important metrics for measuring the quality of routing paths
are the maximum edge congestion $C$ and maximum path length $D$.
Here, we study bicriteria in routing games
where each player $i$ selfishly selects a path that simultaneously minimizes
its maximum edge congestion $C_i$ and path length $D_i$.
We study the stability and price of anarchy of
two bicriteria games:
\begin{itemize}
\item
{\em Max games}, where the social cost is $\max(C,D)$
and the player cost is $\max(C_i, D_i)$.
We prove that max games are stable and convergent
under best-response dynamics, and that the price of anarchy
is bounded above by the maximum path length in the players' strategy sets.
We also show that this bound is tight in worst-case scenarios.

\item
{\em Sum games}, where the social cost is $C+D$
and the player cost is $C_i+D_i$.
For sum games, we first show the negative result that there are game instances
that have no Nash-equilibria.
Therefore,
we examine an approximate game called the {\em sum-bucket game}
that is always convergent (and therefore stable).
We show that the price of anarchy in sum-bucket games is bounded above by
$C^* \cdot D^* / (C^* + D^*)$ (with a poly-log factor),
where $C^*$ and $D^*$ are the optimal coordinated congestion and path length.
Thus, the sum-bucket game has typically
superior price of anarchy bounds than the max game.
In fact, when either $C^*$ or $D^*$ is small
(e.g. constant) the social cost of the Nash-equilibria is very close
to the coordinated optimal $C^* + D^*$ (within a poly-log factor).
We also show that the price of anarchy bound is tight for cases
where both $C^*$ and $D^*$ are large.
\end{itemize}
\end{abstract}


\vspace{2cm}
\noindent

\thispagestyle{empty}
\end{titlepage}

\section{Introduction}
\label{section:intro}
Routing is a fundamental task in communication networks.
Routing algorithms provide paths for packets that will be sent over the network.
There are two metrics that quantify the quality of the paths returned by a routing algorithm:
the congestion $C$, which is the maximum number of paths that use any edge in the network,
and the maximum path length $D$.
Assuming there is a packet for each path,
a lower bound on the
delivery time of the packets
is $\Omega(\max(C,D))$ (alternatively, $\Omega(C+D)$).
Actually, there exist packet scheduling algorithms that given the paths,
they deliver the packets along the paths
in time close to optimal $O(\max(C,D))$
(alternatively, $O(C+D)$)~\cite{CMS96,LMR94,LMR99,OR97,RT96}.

Motivated by the selfish behavior of entities in communication networks,
we study routing games where each packet's path is controlled independently
by a selfish player.
We model games with $N$ players,
where each player has to select a path
from a source to a destination node.
The objective of each player $i$ is to select a path
that simultaneously minimizes two parameters:
the congestion $C_i$, which is the maximum number of paths
that use any edge in player $i$'s path,
and the path length $D_i$.
We examine two kinds of games: {\em max games},
where the player's cost function is $\max(C_i, D_i)$,
and {\em sum games},
where the player's cost function is $C_i + D_i$.
In each of these games, the player's objective is to selfishly minimize its cost
in an uncoordinated manner.
From the player's point of view,
the minimization of the sum or max cost functions
are justified objectives,
since it is shown in \cite{scheideler-local} that player $i$'s
packet
can be delivered
in time
$\tilde O(C_i+D_i)$ (alternatively, $\tilde O(\max(C_i, D_i)$).

A natural problem is to determine the effect of the players' selfishness
on the welfare of the whole communication network.
In the max and sum games,
the welfare of the network is measured with the social cost functions $\max(C,D)$
and $C+D$, respectively.
The choice of these social cost functions is appropriate since
they determine the total time needed
to deliver
the packets represented by the players.
We examine the consequence of the
selfish behavior in Nash equilibria,
which are stable states of the game where no player can
unilaterally improve her situation.
The effect of selfishness is quantified with the
{\em price of anarchy} ($PoA$)~\cite{KP99,P01},
which expresses how much larger is the worst social cost in a Nash equilibrium
compared to the social cost in the optimal coordinated solution.
We study the existence of Nash equilibria and the price of anarchy for
max games and sum games,
where we find that these games produce different results with their own merits.

\subsection{Max Games}

First, we examine max games (the social cost is $\max(C+D)$).
We prove that every max game has at least one Nash equilibrium.
The equilibrium can be obtained by best response dynamics,
where a player greedily changes, whenever possible, the current path
to an alternative path with lower cost.
With best response moves
the game eventually converges to a Nash-equilibrium.
We show that the optimal
coordinated solution is a Nash-equilibrium too.
Thus, max games games have very good Nash equilibria.
This observation is quantified in terms of the
{\em price of stability} $(PoS)$~\cite{anshelevich2,anshelevich1}
which expresses how much larger is the best social cost in a Nash equilibrium
with respect to the social cost in the optimal coordinated solution.
Therefore in max games it holds that $PoS = 1$.

We then examine the worst Nash equilibria.
We bound the price of anarchy ($PoA$) in max games
with respect to the maximum allowable path length $L$ for the players in the network,
and the number of nodes $n$ in the graph:
$$PoA = O(L + \log n).$$
We prove that this bound is worst case optimal (within additive terms).
Specifically, we provide an example game in a ring network
where the optimal coordinated social cost is 1,
while there is a Nash equilibrium with cost $O(L) = O(n)$.

\subsection{Sum Games}

We continue with examining sum games (the social cost is $C+D$).
Intuitively, sum games have the potential to give better price of anarchy than
the max games because both parameters ($C$ and $D$) affect the choices at
all the time, even when one parameter is larger than the other.
For example,
the ring game that we mentioned above
has price of anarchy equal to $1$ in the sum game.
However, we prove a limitation of sum games:
not all sum games have Nash equilibria;
there exist instances of sum games with a small number of players
that do not have Nash equilibria at all.
This limitation directed us towards
exploring alternative games
which are stable (have Nash equilibria)
and have similar characteristics with the original sum games.

We found such a game variation that we call {\em sum-bucket game}.
In sum-bucket games the players are divided into $\log n$ classes, called {\em buckets},
according to the packet paths that they choose.
Bucket $k$ holds the paths of players
with length in range $[2^k, 2^{k-1})$.
Suppose that player $i$'s path is in bucket $k$.
The {\em normalized congestion} of player $i$, denoted $\C_i$,
is measured with respect to the paths that belong to bucket $k$.
The {\em normalized length} of player $i$'s path is $\D_i = 2^{k-1} -1$,
(which is a factor 2 approximation of the original length).
Player $i$'s cost function is $\C_i + \D_i$.
Thus, in sum-bucket games only players in the same bucket compete
with each other, while players in different buckets do not interfere.
The {\em normalized social cost} function is defined to be $\C + \D$,
where $\C$ is the maximum normalized congestion in any bucket,
and $\D$ is the maximum normalized depth of all paths.

We first show that sum-bucket games always have Nash equilibria,
which can be obtained with best response dynamics.
We then examine the quality of the Nash equilibria.
For every game there is a corresponding coordinated bucket routing problem.
We can bound the price of anarchy $PoA$
with respect to the normalized optimal congestion $\C^*$
and normalized path length $\D^*$ in the optimal coordinated solution.
We obtain:
$$
PoA = O \left ( \frac { \C^* \cdot \D^* } {\C^* + \D^*} \cdot \log^2 n \right )
$$
Therefore, when either of $\C^*$ or $\D^*$ is small (e.g. a constant),
the Nash equilibrium provides a very good approximation (within a poly-log factor from optimal)
to the uncoordinated routing problem.
In such scenarios, the price of selfishness is small.
However, when both $\C^*$ or $\D^*$ are simultaneously large,
the approximation becomes worse
(though still typically lower than the PoA of the max game since
the PoA is bounded by the smaller of $\C^*$ or $\D^*$, where $\D^* < 2 L$).
Nevertheless, even in these scenarios the PoA bound is tight
in certain games instances.

Sum-bucket games are interesting variations of sum games because
they are stable,
and they can be used to approximate
solutions for the $C+D$ social cost.
For any sum-bucket game, there is a
corresponding ``original'' coordinated routing problem
where the objective is to minimize the social cost $C+D$ without using buckets.
It holds that $\C \leq C \leq \C \cdot \log n$ and $\D \leq D \leq 2 \D$;
thus, $C+D = O(\C \cdot \log n + \D)$.
In other words, the normalized social cost can be used as an approximation
for the ``original'' social cost.
Let $PoA'$ denote how much larger is the worst equilibrium
of a sum-bucket game compared to the optimal solution of the
coordinated original problem (with respect to the social cost $C+D$).
It holds that $PoA' \leq PoA \cdot \log n$.
Consequently, the observations that we made above for the
$PoA$ in sum-bucket games apply also
with respect to the original routing problem.
For example,
when one of $C^*$ or $D^*$ is small (e.g. a constant),
then the Nash equilibrium of the sum-bucket game
provides a very good approximation (within a poly log factor from optimal)
to the coordinated original routing problem.

\subsection{Related Work}

Routing games (on congestion) were introduced and
studied in~\cite{monderer1,rosenthal1}.
The notion of price of anarchy was introduced in \cite{KP99}.
Since then, many routing game models have been studied
which are distinguished
by the topology of the network,
cost functions,
type of traffic (atomic or splittable),
nature of strategy sets,
and kind of equilibria (pure or mixed).
Specifically,
pure equilibria with atomic flow have been studied in
\cite{BM06,CK05,libman1,rosenthal1,STZ04} (our work fits into this category),
and with splittable flow in
\cite{roughgarden1,roughgarden2,roughgarden3,roughgarden5}.
Mixed equilibria with atomic flow have been studied in
\cite{CKV02,czumaj1,FKP02,GLMMb04,GLMM04,GLMMR04,KMS02,KP99,LMMR04,MS01,P01},
and with splittable flow in
\cite{correa1,FKS02}.

To our knowledge there is no previous work that considers
routing games that optimize two criteria simultaneously.
Most of the work in the literature uses a single cost metric
which is related to the congestion.
A common metric for the player cost
is the sum of the congestions on all the edges of the player's path
(we denote this kind of player cost as $pc'$)
and the respective social cost is the cost of the worst player's path
(we denote this social cost as $SC'$)
\cite{CK05,GLMMR04,roughgarden2,roughgarden3,roughgarden5,STZ04}.
However, as we discussed before,
in packet scheduling algorithms,
the $pc'$ or $SC'$ do not govern
the packet delays;
$\max(C_i, D_i)$ or $C_i + D_i$ govern the packet delay.

Other combinations of player costs and social costs
have been studied in the literature:
player cost $pc'$ and social cost $C$ has been studied in
\cite{CK05,correa1,CKV02,czumaj1,FKP02,FKS02,GLMMb04,KMS02,KP99,MS01,P01,roughgarden1};
player cost $C_i$ and social cost $C$ has been studied in \cite{BM06};
other variations have been studied in \cite{GLMM04,libman1,LMMR04,rosenthal1}.
The vast majority of the work on routing games has been performed
for parallel link networks,
with only a few exceptions on general network topologies
\cite{BM06, CK05,correa1,roughgarden1}.

Our work is closer to \cite{BM06}.
We extended some results presented in \cite{BM06}
to apply to bicriteria,
instead of the single criterium of congestion, player cost $C_i$ and social cost $C$,
that was used in~\cite{BM06}.
Specifically, the particular techniques that we use
to prove existence of Nash equilibria with best response dynamics,
and also to prove upper bounds on the price of anarchy,
were originally introduced in~\cite{BM06}.
Here, we modified and extended appropriately these techniques
in a non-trivial way to apply to our new cost functions.


\subsection*{Outline of Paper}
We proceed as follows.
In Section \ref{section:definitions}
we give basic definitions.
We study max games in Section \ref{section:max-games}
and sum games in Section \ref{section:sum-games}.
We finish with our conclusions in Section \ref{section:conclusions}.
Due to space limitations some proofs have moved to the appendix.
\section{Definitions}
\label{section:definitions}

An instance of a \emph{routing game} is a tuple
$\R = (\N,G,\paths)$,
where
$\N=\{1,2,\ldots,N\}$ are the players,
$G = (V,E)$ is a graph with nodes $V$ and edges $E$,
and the graph has paths $\paths = \bigcup_{i \in \N} \paths_i$,
where $\paths_i$ is a collection of
available paths in $G$ for player $i$.
Each path in
$\paths_i$ is a path in $G$ that
has the same source
$u_i\in V$ and destination
$v_i\in V$;
each path in $\paths_i$ is a
{\em pure strategy} available to player $i$.
A {\em pure strategy profile}
$\p=[p_1,p_2,\cdots,p_N]$ is a collection of pure
strategies (paths),
one for each player, where $p_i\in\paths_i$.
We refer to a pure strategy profile as a \emph{routing}.
On a finite network, a routing game is necessarily
a finite game.

For any routing $\p$ and any edge $e\in E$, the
\emph{edge-congestion}  $C_e(\p)$ is the number of paths in
$\p$ that use edge $e$.
For any path $p$, the \emph{path-congestion}
$C_p(\p)$ is the maximum edge congestion over all
edges in $p$, $C_p(\p)=\max_{e\in p}C_e(\p)$.
We will use the notation $C_i(\p) = C_{p_i}(\p)$,
for any user $i$.
The \emph{network congestion}
is the maximum edge-congestion over all edges in $E$,
that is,
$C(\p)=\max_{e\in E}C_e(\p)$.
We denote the length (number of edges) of any path $p$ as $|p|$.
For any user $i$,
we will also use the notation $D_{p_i}(\p)$ or $D_i(\p)$
to denote the length $|p_i|$.
The {\em longest path length} in $\paths$
is denoted $L(\paths) =\max_{p\in \paths}|p|$.
We will denote by $D(\p)$ the maximum path length
in routing $\p$, that is $D(\p) = \max_{p \in \p} |p|$.
When the context is clear, we will
drop the dependence on $\p$ and $\R$
and use the notation $C_e,C_p,C_i,C,L,D_p,D_i,D$.

For game $R$ and routing $\p$,
the \emph{social cost} (or {\em global cost})
is a function of routing $\p$, and it is
denoted $SC(\p)$.
The \emph{player or local cost} is also a function on $\p$
denoted $pc_i(\p)$.
We use the standard notation
$\p_{-i}$ to refer to the collection of paths
$\{p_1,\cdots,p_{i-1},p_{i+1},\cdots,p_N\}$, and
$(p_i;\p_{-i})$ as an alternative notation for $\p$ which
emphasizes the dependence on $p_i$.
Player $i$
is \emph{locally optimal} in routing $\p$ if
$pc_i(\p) \leq pc_i(p_i';\p_{-i})$ for
all paths $p_i'\in\paths_i$.
A routing $\p$ is in a Nash Equilibrium
(we say $\p$ is a \emph{Nash-routing}) if
every player is locally optimal.
Nash-routings quantify the notion of
a stable selfish outcome.
A routing $\p^*$ is an optimal pure strategy profile
if it has minimum
attainable social cost: for any other pure strategy profile
$\p$, $SC(\p^*)\le SC(\p)$.

We quantify the quality of the Nash-routings by the
\emph{price of anarchy} ($PoA$)
(sometimes referred to as the coordination ratio)
and the
\emph{price of stability} ($PoS$).
Let $\bf P$ denote the set of distinct
Nash-routings, and let
$SC^*$ denote the
social cost of an optimal routing $\p^*$.
Then,
%
\begin{equation*}
PoS
= \inf\limits_{\p\in ~{\bf P}} \frac{SC(\p)}{SC^*},
\qquad
PoA
=\sup\limits_{\p\in ~{\bf P}} \frac{SC(\p)}{SC^*}.
\end{equation*}

\section{Max Games}
\label{section:max-games}

Let $\R = (\N,G,\paths)$
a routing game
such that for any routing $\p$
the social cost function is $SC(\p) = \max(C(\p),D(\p))$,
and the player cost function $pc_i(\p) = \max(C_i(\p), D_i(\p))$.
We refer to such routing games as {\em max games}.
First, we show that max games have Nash-routings
and the price of stability is 1.
Then, we bound the price of anarchy.

\subsection{Existence of Nash-routings in Max Games}
\label{section:max-stability}

We show that max games have Nash-routings.
We prove this result by first giving a totaly order for the routings
using a form of lexicographic ordering.
Then we show that any greedy move of a player can only
give a new routing with smaller order.
Thus, the greedy moves will converge either to the smallest routing
or to a routing where no player can improve further.
In either case, a Nash-routing will be reached.

Let $\R = (\N,G,\paths)$
be a max routing game.
Let $r = \max(N,L)$.
For any routing $\p$
we define
the \emph{routing vector}
$M(\p)=[m_1(\p), \ldots, m_r(\p)]$,
where $m_i(\p) = a_i(\p) + b_i(\p)$,
and $a_i(\p)$ is the number of paths with congestion $i$,
and $b_i(\p)$ is the number of paths with length $i$.
Note that if $SC(\p) = k$ then $m_{k} \neq 0$ and $m_k' = 0$ for all $k' > k$.

We define a total order on the routings as follows.
Let $\p$ and $\p'$ be two routings, with
$M(\p)=[m_1,\ldots,m_r]$, and
$M(\p')=[m'_1,\ldots,m'_r]$.
We say that $M(p) = M(p')$ if
$m_i = m'_i$ for all $1 \leq i \leq r$.
We say that $M(\p) < M(\p')$ if there is
a $j$, $1 \leq j \leq r$,
such that $m_k = m'_k $ for all $k > j$,
and $m_j < m'_j$.
We order the $p$ and $p'$ according to the
order of their respective vectors,
that is $p \leq p'$ if and only if $M(p) \leq M(p')$.
Note that for any two $p$ and $p'$ it either holds that $p = p'$
or $p < p'$.
That is, the routings are totally ordered.

Consider an arbitrary routing $\p$.
If $\p$ is not a Nash-routing,
there is at least one user $i$ which is not locally optimal.
Then a {\em greedy move} is available to player $i$
in which the player can obtain lower cost by changing the path
from $p_i$ to some other path $p'_i$ with lower cost.
In other words, the greedy move takes the original routing
$\p = (p_i;\p_{-i})$ to a routing $\p' = (p_i';\p'_{-i})$
with improved player cost $pc_i(\p') < pc_i(\p)$,
such that $p_i$ is replaced by $p_i'$
and the remaining paths stay the same
($\p_{-i} = \p'_{-i}$).
We show now that any greedy move gives a smaller order routing:

\begin{lemma}
\label{theorem:max-greedy}
If a greedy move by any player takes a routing
$\p$ to a new routing
$\p'$, then
$\p' < \p$.
\end{lemma}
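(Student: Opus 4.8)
The plan is to show that a greedy move by player $i$, which strictly decreases $pc_i = \max(C_i, D_i)$, causes the routing vector $M(\p)$ to strictly decrease in the total order. The key observation is that the path length $D_i = |p_i|$ is entirely under player $i$'s control, while changing $p_i$ to $p_i'$ can only affect the congestion of edges on $p_i \cup p_i'$, and moreover can only \emph{decrease} the congestion on edges of the old path $p_i$ and \emph{increase} the congestion on edges of the new path $p_i'$. So first I would fix notation: let $c = pc_i(\p)$ and $c' = pc_i(\p') < c$, so $c' \le c - 1$. Every component $m_j$ counts paths of congestion $j$ plus paths of length $j$; I want to locate the largest index $j$ at which $M(\p)$ and $M(\p')$ differ and check that $M(\p')$ is smaller there.

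The main step is a careful accounting of how the multiset of ``congestion values of paths'' and ``length values of paths'' changes. For lengths: only one path changes, from length $|p_i| \le c$ to length $|p_i'| \le c'$, so the length-contribution to components with index $> c$ is unchanged (it was already zero in that range by the note that $m_k = 0$ for $k > SC(\p)$, but I want the finer statement that no length in $\p$ or $\p'$ exceeds $c$ or $c'$ respectively). For congestion: the paths whose path-congestion can change are exactly those that use an edge of $p_i$ or of $p_i'$. When player $i$ moves off $p_i$, edge-congestions on $p_i$ drop by one, so path-congestions of affected paths can only decrease; when player $i$ moves onto $p_i'$, edge-congestions on $p_i'$ rise by one, so affected path-congestions can only increase, but crucially any path using an edge of $p_i'$ in the new routing $\p'$ has path-congestion at most $C_{p_i'}(\p')$; and since $p_i'$ itself must have congestion $C_i(\p') \le c'$ in $\p'$ (because $pc_i(\p') = \max(C_i(\p'), D_i(\p')) = c'$), no path's congestion is pushed above $c'$ by the move except possibly paths that already had high congestion from edges \emph{not} on $p_i'$. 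Let me phrase that more carefully: the only congestion increases happen on edges of $p_i'$, and on those edges the new congestion is at most $c'$, so a path's congestion can be \emph{raised} only up to at most $c'$; any path with congestion exceeding $c'$ in $\p'$ already had that congestion in $\p$ from an edge not on $p_i'$.

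Putting this together: consider components $m_j$ for $j > c$. The length-contributions are zero in both routings. The congestion-contributions: no path had its congestion raised above $c'< c \le j$ by the move (raises cap at $c'$), and $p_i$ itself had congestion exactly... well, $C_i(\p)$ could be less than $c$ if $D_i(\p) = c$, but in any case removing $p_i$ and its edges only lowers other paths' congestions; so for $j > c$, $m_j(\p') \le m_j(\p)$. For the threshold index: I would look at $j^* = c$. The path $p_i$ contributed to $m_c(\p)$ via either its length (if $D_i(\p) = c$) or its congestion (if $C_i(\p) = c$), or both, and after the move $p_i$ is gone and $p_i'$ has both parameters $\le c' < c$, so $p_i'$ contributes nothing to index $c$. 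Meanwhile other paths only lose congestion at level $c$ (they can't gain it, since gains cap at $c'$), and their lengths are untouched. Hence $m_c(\p') < m_c(\p)$ strictly, while $m_j(\p') \le m_j(\p)$ for all $j > c$. This shows $M(\p') < M(\p)$, hence $\p' < \p$.

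The main obstacle I anticipate is the bookkeeping around congestion \emph{increases}: I must argue cleanly that no path is pushed to congestion level $\ge c$ by the move, which hinges on the fact that all newly-congested edges lie on $p_i'$ and carry congestion at most $c' \le c-1$ in the new routing. A subtle point is handling a path that uses edges of \emph{both} $p_i$ and $p_i'$: its congestion might go down on the $p_i$-edges and up on the $p_i'$-edges, but its overall path-congestion is still the max over its edges, which is either inherited from an edge outside $p_i'$ (unchanged or decreased) or comes from a $p_i'$-edge (at most $c'$); either way it does not reach level $\ge c$ unless it was already there from an edge untouched by the swap. Once that invariant is nailed down, comparing the vectors at the critical index $c$ and above is routine, and the conclusion $\p' < \p$ follows directly from the definition of the order.
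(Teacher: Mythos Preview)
Your argument is correct and follows the same approach as the paper: locate the critical index $c = pc_i(\p)$, show that $m_c$ strictly drops because player $i$'s contribution at level $c$ disappears while no other player can gain one there, and show the entries above $c$ are unaffected. Your treatment of how \emph{other} players' path-congestions change is in fact more careful than the paper's own proof, which asserts somewhat loosely that only the four positions $k_1^{\max}, k_1^{\min}, k_2^{\max}, k_2^{\min}$ can differ; the key observation you spell out---that every edge of $p_i$ has congestion at most $c$ in $\p$ and every edge of $p_i'$ has congestion at most $c' < c$ in $\p'$, so no edge (and hence no path) congestion above level $c$ is touched by the swap---is precisely what justifies equality of the vectors above index $c$ and makes both arguments go through.
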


\begin{proof}
Let $pc_i(\p) = \max(C_i(\p), D_i(\p)) = k_1^{\max}$,
and $\min(C_i(\p), D_i(\p)) = k_1^{\min}$ (clearly, $k_1^{\max} \geq k_1^{\min}$).
Let also $pc_i(\p') = \max(C_i(\p'), D_i(\p')) = k_2^{\max}$,
and $\min(C_i(\p'), D_i(\p')) = k_2^{\min}$ (clearly, $k_2^{\max} \geq k_2^{\min}$).
Since player $i$ can decrease its cost in $\p'$, $k_2^{\max} < k_1^{\max}$.
Consider now the vectors $M(\p) = [m_1,\ldots,m_r]$
and $M(\p') = [m'_1,\ldots,m'_r]$.
These two vectors are the same except possibly for entries
$k_1^{\max}$, $k_1^{\min}$, $k_2^{\max}$, $k_2^{\min}$,
which correspond to the positions
that are affected by paths $p_i$ and $p'_i$.
It holds that $m_{k_1^{\max}} > m'_{k_1^{\max}}$,
since when the path switches to $\p_i'$,
$m_{k_1^{\max}} = a_{k_1^{\max}} + b_{k_1^{\max}}$
decreases by at least one because either $a_{k_1^{\max}}$ decreases by one
(if the new path has lower congestion)
or $b_{k_1^{\max}}$ decreases by one (if the new path has lower length).
Since $k_2^{\max} < k_1^{\max}$,
$M(\p) > M(\p')$ implying that $\p > \p'$.
\end{proof}

Since there are only a finite number of routings,
Lemma \ref{theorem:max-greedy} implies that starting from arbitrary
initial state,
every best response dynamic converges in a finite time
to a Nash-routing, where every player is locally optimal.
Since the routings are totally ordered,
there is a routing $\pm$ which is the minimum,
that is, for all routings $\p$, $\pm \leq \p$.
Clearly, the minimum routing is also a Nash-routing.
The minimum routing $\pm$
achieves also optimal social cost,
since if there was another routing $\p'$ with lower social cost,
then it can be easily shown that $\p' < \pm$,
which is contradiction.
Thus, the price of stability is $1$.
Therefore, we have the following result:

\begin{theorem}[Stability of max games]
\label{theorem:max-stability}
For any max game $\R$,
every best response dynamic converges to a Nash-routing,
and the price of stability is $PoS = 1$.
\end{theorem}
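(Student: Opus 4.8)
The plan is to assemble the statement from Lemma~\ref{theorem:max-greedy} together with the finiteness of the game and the structure of the total order on routings; no new machinery is needed. For convergence, I would fix an arbitrary max game $\R$ and an arbitrary best-response sequence $\p^{(0)}, \p^{(1)}, \p^{(2)}, \ldots$, in which each $\p^{(t+1)}$ is obtained from $\p^{(t)}$ by a greedy move of some player who is not locally optimal in $\p^{(t)}$. By Lemma~\ref{theorem:max-greedy}, $\p^{(t+1)} < \p^{(t)}$ for every $t$, so the routing vectors satisfy $M(\p^{(0)}) > M(\p^{(1)}) > \cdots$ strictly in the total order. Since $\paths$ is finite there are only finitely many routings, hence only finitely many routing vectors, and a strictly decreasing sequence in a finite totally ordered set must terminate. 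Thus the dynamic halts after finitely many moves, and it can halt only at a routing in which no player has a greedy move available --- that is, a routing in which every player is locally optimal, a Nash-routing.

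For the price of stability, I would exhibit one Nash-routing of optimal social cost. Because the routings form a finite totally ordered set, there is a minimum routing $\pm$. It is a Nash-routing: if some player had a greedy move from $\pm$, Lemma~\ref{theorem:max-greedy} would yield a routing strictly below $\pm$, contradicting minimality. It also attains the optimal social cost: suppose some routing $\p'$ had $SC(\p') < SC(\pm) =: k$. By the observation recorded just after the definition of the routing vector, $m_k(\pm) \neq 0$ and $m_{k'}(\pm) = 0$ for all $k' > k$, while every coordinate of $M(\p')$ at a position $\geq k$ is zero. Comparing the two vectors from the top index $r$ downward, they agree on positions $k+1, \ldots, r$, and at position $k$ we have $m_k(\p') = 0 < m_k(\pm)$; hence $M(\p') < M(\pm)$, i.e. $\p' < \pm$, again a contradiction. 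Therefore $SC(\pm) \leq SC(\p)$ for every $\p$, so $SC(\pm) = SC^*$, and since $\pm$ is a Nash-routing of this cost, $PoS = \inf_{\p \in {\bf P}} SC(\p)/SC^* \leq 1$; the reverse inequality is immediate, giving $PoS = 1$.

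The argument is mostly bookkeeping; the one step that needs genuine care is the claim that a routing of strictly smaller social cost is strictly smaller in the vector order, which rests on verifying the ``note'' that $SC(\p) = k$ forces $m_k(\p) \neq 0$ and $m_{k'}(\p) = 0$ for $k' > k$. This holds because $\max(C(\p), D(\p)) = k$ bounds both the path-congestion and the length of every path by $k$ (so positions above $k$ are unpopulated), while the edge or path realizing the social cost contributes to $a_k(\p)$ or $b_k(\p)$ (so position $k$ is populated). Pinning down this top-down comparison of routing vectors is the main obstacle; once it is in hand, both convergence and $PoS = 1$ follow immediately from Lemma~\ref{theorem:max-greedy} and finiteness.
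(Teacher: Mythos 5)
Your proof is correct and follows essentially the same route as the paper: Lemma~\ref{theorem:max-greedy} plus finiteness of the routing set gives convergence of best-response dynamics, and the minimum routing $\pm$ in the total order is shown to be a Nash-routing of optimal social cost via the top-down comparison of routing vectors, which is exactly the paper's argument (spelled out in its Lemma~\ref{theorem:max-minimum-optimal}). Your verification of the ``note'' that $SC(\p)=k$ forces $m_k\neq 0$ and $m_{k'}=0$ for $k'>k$ is a welcome explicit detail the paper leaves implicit.
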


\subsection{Price of Anarchy in Max Games}
\label{section:expansion}

We bound the price of anarchy in max games.
Consider a max routing game $\R = (\N,G,\paths)$,
where $G$ has $n$ nodes.
Theorem \ref{theorem:max-stability} implies that there is at least one Nash-routing.
Consider a Nash-routing $\p$.
Denote $C = C(\p)$ and $D = D(\p)$.
Let $\p^*$ be the optimum (coordinated) routing
with minimum social cost.
Denote $C^* = C(\p^*)$ and $D^* = D(\p^*)$.
Note that each payer $i \in \N$
has a path $p_i \in \p$ and a corresponding ``optimal'' path $p^*_i \in \p^*$
from the player's source to the destination.

For each edge $e \in G$, denote $\Pi_e(\p)$ the set of players
whose paths in routing $\p$ use edge $e$.
We define $H$ to be a set that contains all edges $e \in G$ with congestion $C_e(\p) \geq D + 2$.
Consider an edge $e \in H$.
Let $i \in \Pi_e(\p)$ be a player whose path $p_i$ in routing $\p$
uses edge $e$.
We define $f(e,i)$ to be a set that contains all edges $e' \in p^*_i$ with $C_{e'}(\p) \geq C_e(\p) - 1$.
It holds that $|f(e,i)| \geq 1$,
since in routing $\p$ player $i$
prefers path $p_i$ instead of $p^*_i$
because there is at least one edge $e' \in p^*_i$ with $C_{e'}(\p) \geq C_e(\p) - 1 > D$.
Let $f(e) = \cup_{i \in \Pi_e(\p)} f(e,i)$.
For any set of edges $X \subset H$,
we define $f(X) = \bigcup_{e \in X} f(e)$.

\begin{lemma}
\label{theorem:max-expansion}
Let $Z$ be the set that contains all
edges $e$ with congestion $C_e(\p) \geq C - 2\log n$.
If $C \geq D + 2\lg n + 2$,
then there is a set of edges $X \subseteq Z$
such that $|f(X)| \leq 2|X|$.
\end{lemma}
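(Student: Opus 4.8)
I need to prove that if the congestion $C$ in a Nash-routing is sufficiently larger than $D$, then some "slab" $Z$ of high-congestion edges (those with congestion at least $C - 2\log n$) has the property that the image $f(Z)$ under the map $f$ is not much larger than $Z$ itself — it expands by a factor of at most $2$. The intuition is a counting/pigeonhole argument on the layered structure of congestion levels. The key mechanism is already in place: for any edge $e \in H$ and any player $i$ routed through $e$, the set $f(e,i) \subseteq p_i^*$ is nonempty because player $i$ prefers her Nash path $p_i$ to her optimal path $p_i^*$, which (since $C_e(\p) - 1 > D \geq D_i(\p^*)$, so the max cost is governed by congestion) forces some edge on $p_i^*$ to carry congestion at least $C_e(\p) - 1$.

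**Main argument.**

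The plan is a counting argument over congestion levels, in the spirit of the expansion lemmas from \cite{BM06}. First I would partition the candidate edge sets by congestion level: for $j$ in a suitable range near $C$, let $Z_j$ be the set of edges with congestion exactly $j$, and let $W_j = \bigcup_{j' \geq j} Z_{j'}$ be the edges with congestion at least $j$, so that $Z = W_{C - 2\lg n}$. The crucial observation is a "one-step" containment: for any edge set $X$ consisting of edges all having congestion at least some value $t \geq D+2$, every edge in $f(X)$ has congestion at least $t - 1$; hence $f(W_t) \subseteq W_{t-1}$. Iterating $2\lg n$ times from the top, $f^{(2\lg n)}(W_C) \subseteq W_{C - 2\lg n} = Z$, and more usefully $f(Z) \subseteq W_{C - 2\lg n - 1}$, which (using $C \geq D + 2\lg n + 2$) still lies entirely inside $H$, so $f$ is well-defined and iterable on all these sets.

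Next I would run the counting. Suppose for contradiction that $|f(X)| > 2|X|$ for \emph{every} $X \subseteq Z$; in particular this would have to hold for each $W_j$ with $C - 2\lg n \leq j \leq C$. But $|W_C| \geq 1$ (there is at least one edge of congestion exactly $C$), and applying the assumed expansion repeatedly while walking \emph{down} the congestion levels — at each step $W_{j}$ relates to a larger set of edges one level lower — I get that $|W_{C - 2\lg n}| \geq 2^{2\lg n}|W_C| \geq n^2$. Since the total number of edges in $G$ is at most $\binom{n}{2} < n^2$ (as $G$ has $n$ nodes), this is a contradiction. Hence there must exist some $X \subseteq Z$ — indeed one may take $X$ to be one of the threshold sets $W_j$ — with $|f(X)| \leq 2|X|$.

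**The main obstacle.**

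The delicate point is getting the monotonicity and the direction of the iteration exactly right: I am pushing $f$ \emph{downward} in congestion (from level $C$ toward level $C - 2\lg n$), but the expansion hypothesis, if false, must be leveraged to blow up the \emph{size} of the lower sets, so I have to be careful whether the chain of inequalities is $|W_{j-1}| \geq |f(W_j)| > 2|W_j|$ or something that correctly telescopes. In particular I need $f(W_j) \subseteq W_{j-1}$ (true, by the one-step containment) \emph{and} I need each $W_j$ along the way to be nonempty so that the "$\geq 1$" seed at level $C$ propagates; the condition $C \geq D + 2\lg n + 2$ is exactly what guarantees every level touched stays above $D+2$, keeping all the relevant $f$-images well-defined inside $H$. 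Managing the off-by-one terms ($C_e \geq C_e' - 1$, the $+2$ in the threshold for $H$, the choice of $2\lg n$ versus $2\log n$) is the only real bookkeeping hazard; the structural idea — layered pigeonhole with a doubling contradiction against the $n^2$ edge bound — is straightforward.
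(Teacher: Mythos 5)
Your proof is correct, and it runs on the same engine as the paper's own argument: assume every candidate subset of $Z$ satisfies $|f(X)| > 2|X|$, double $2\lg n$ times starting from a nonempty seed at congestion level $C$, and contradict the fact that $G$ has at most $n^2$ edges. The difference is the family of candidate sets. The paper grows its sets recursively, taking $E_0$ to be the edges of congestion exactly $C$ and $E_j = E_{j-1}\cup f(E_{j-1})$, observing that each application of $f$ lowers the guaranteed congestion by at most $1$ so that every $E_j$ stays inside $Z\subseteq H$, and then extracting the witness as one of the $E_j$; you instead work with the fixed threshold slabs $W_j$ (edges of congestion at least $j$), prove the one-step containment $f(W_j)\subseteq W_{j-1}$, and telescope $|W_{j-1}|\ge |f(W_j)| > 2|W_j|$ from $j=C$ down to $j=C-2\lg n$, the hypothesis $C\ge D+2\lg n+2$ keeping every slab you apply $f$ to inside $H$. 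Both are valid; your version invokes the negated conclusion on a predefined family rather than on a family built from $f$ itself, which makes the monotonicity bookkeeping a bit more transparent, whereas the paper's recursive construction is the one that transfers verbatim to the sum-bucket analogue (Lemma \ref{theorem:sum-bucket-expansion}), where the per-step congestion drop is $\D^*$ rather than $1$. Two inessential remarks: your side comment that $f(Z)\subseteq W_{C-2\lg n-1}$ still lies inside $H$ is off by one (those edges are only guaranteed congestion $D+1$), but you never apply $f$ below level $C-2\lg n+1$, so nothing is affected; and the nonemptiness of $f(e,i)$, which you rederive in your preamble, is not actually needed in this lemma---it only matters for the subsequent bound on $C^*$.
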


\begin{proof}
We recursively construct a sets of edges $E_0, \ldots, E_{2\lg n}$,
such that $E_i = E_{i-1} \cup f(E_{i-1})$,
and set $E_0$ contains
all the edges $e$ with congestion $C_e(\p) = C$.
From the construction of those sets
it holds that for any $e \in E_j$, $C_{e}(\p) \geq C-j$,
where $0 \leq j \leq 2\lg n$.
Thus, $E_j \subseteq Z$, for all $0 \leq j \leq 2\lg n$.
(Note that $Z \subseteq H$.)

We can show that there is a $j$, $0 \leq j \leq 2\lg n$,
such that $|f(E_j)| \leq 2 |E_j|$.
Suppose for contradiction that such a $j$ does not exist.
Thus for all $j$, $0 \leq j \leq 2\lg n$,
it holds that $|f(E_j)| > 2|E_j|$.
In this case, it it straightforward to show that $|E_k| > 2|E_{k-1}|$,
for any $1 \leq k \leq 2\lg n$.
Since $|E_0| \geq 1$,
it holds that $|E_{2\lg n}| > 2^{2\lg n} = n^2$.
However, this is a contradiction, since
the number of edges in $G$ do not exceed $n^2$.
\end{proof}


\begin{lemma}
\label{theorem:max-upper-bound}
If $C \geq D + 2 \lg n + 2$, then $C < 2 L C^* + 2\lg n$.
\end{lemma}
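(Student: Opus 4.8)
The plan is to use the expansion Lemma~\ref{theorem:max-expansion} to find a small set of edges that "traps" all the high-congestion traffic, and then compare against the optimal routing on exactly those edges. Concretely, suppose $C \geq D + 2\lg n + 2$. Apply Lemma~\ref{theorem:max-expansion} to obtain a set $X \subseteq Z$ with $|f(X)| \leq 2|X|$, where $Z$ is the set of edges with congestion at least $C - 2\lg n$. I would first lower-bound the number of players whose $\p$-paths touch $X$: since every edge $e \in X$ has $C_e(\p) \geq C - 2\lg n$, the total number of (player, edge) incidences on $X$ is at least $(C - 2\lg n)\,|X|$. Each such player $i$ contributes at least one edge to $f(e,i) \subseteq f(X)$ lying on its optimal path $p_i^*$; the point of the expansion bound is that these optimal-path edges are all squeezed into the set $f(X)$, which has size at most $2|X|$.

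The next step is a counting argument on $f(X)$ in the \emph{optimal} routing. Every player counted above routes, in $\p^*$, through at least one edge of $f(X)$, and a player's optimal path has length at most $L$, so it can pass through at most $L$ edges of $f(X)$. Hence the number of players with an optimal-path edge in $f(X)$ is at least (incidences on $X$)/$L$ divided appropriately — more carefully, the number of (player, edge) incidences in $\p^*$ restricted to edges of $f(X)$ is at least the number of such players, which is at least $\Omega((C-2\lg n)|X| / (\text{something}))$; meanwhile this same incidence count is at most $C^* \cdot |f(X)| \leq 2 C^* |X|$, since each edge of $f(X)$ carries at most $C^*$ optimal paths. Comparing the two estimates, the $|X|$ cancels, and rearranging gives $C - 2\lg n = O(L C^*)$, i.e. $C < 2 L C^* + 2\lg n$ after tracking the constants.

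I would carry out the steps in this order: (1) invoke Lemma~\ref{theorem:max-expansion} to get $X$ and the bound $|f(X)| \leq 2|X|$; (2) count player-edge incidences on $X$ in $\p$ from below using the congestion lower bound on $Z$; (3) map each such incidence to an edge of $f(X)$ on the player's optimal path, using $|f(e,i)| \geq 1$; (4) count player-edge incidences in $\p^*$ on $f(X)$ from above using $C^*$ per edge and $|f(X)| \leq 2|X|$, and from below using that each relevant player appears at least once; (5) combine and solve for $C$. The main obstacle is step (3)–(4): I must be careful that a single edge $e' \in f(X)$ may serve as the "witness" $f(e,i)$ for many players and many $e \in X$ simultaneously, so the bookkeeping has to go through the number of \emph{players} touching $X$ rather than a naive sum of $|f(e,i)|$ over all $e$, and I need the factor $L$ to account for the fact that one player's optimal path can cover up to $L$ edges of $f(X)$. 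Getting the constants to land exactly at $2 L C^* + 2\lg n$ rather than something weaker will require using $|X| \geq |E_0| \geq 1$ and the precise threshold $C - 2\lg n$ from $Z$.
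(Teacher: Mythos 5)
You have reproduced the paper's argument in all essentials: invoke Lemma~\ref{theorem:max-expansion} to get $X \subseteq Z$ with $|f(X)| \leq 2|X|$, lower-bound the total utilization of $X$ in $\p$ by $(C-2\lg n)|X|$, pass to the set $\Pi$ of players whose $\p$-paths meet $X$, and on the optimal side use that each player of $\Pi$ crosses at least one edge of $f(X)$ while each edge of $f(X)$ carries at most $C^*$ optimal paths, so $|\Pi| \leq C^*|f(X)| \leq 2C^*|X|$; cancelling $|X|$ yields the bound. The one concrete misstep is where the factor $L$ enters. You say you need $L$ ``to account for the fact that one player's optimal path can cover up to $L$ edges of $f(X)$.'' That fact plays no role: on the optimal side you only need $|f(e,i)| \geq 1$ (one edge of $f(X)$ per player), and the upper bound $C^* \cdot |f(X)|$ is per edge, so a $p_i^*$ covering many edges of $f(X)$ never hurts either estimate. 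The division by $L$ is needed on the \emph{Nash} side, to turn the incidence count on $X$ into a count of distinct players: each path $p_i$ in $\p$ has length at most $L$, hence meets at most $L$ edges of $X$, so each player contributes at most $L$ incidences and $|\Pi| \geq (C-2\lg n)|X|/L$. This is exactly the paper's step $M \leq L\cdot|\Pi|$. Your worry that one edge of $f(X)$ may serve as witness for many players is already answered by your own per-edge bound of $C^*$ on $f(X)$, so no further bookkeeping is needed. With the $L$-step justified correctly, your chain $(C-2\lg n)|X|/L \leq |\Pi| \leq 2C^*|X|$ gives $C \leq 2LC^* + 2\lg n$, matching the lemma.
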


\begin{proof}
From Lemma \ref{theorem:max-expansion},
there is a set of edges $X \subset Z$ with $|f(X)| \leq 2|X|$.
For each $e \in Z$ it holds that $C_e(\p) \geq C - 2 \lg n$.
Let $M = \sum_{e \in X} C_e(\p) \geq |X| (C - 2\lg n)$,
where $M$ denotes the total utilization of the edges
in $X$ by the paths of the players in $\Pi$.
Let $\Pi = \bigcup_{e \in X} \Pi_e(\p)$,
that is, $\Pi$ is the set of players which in routing $\p$
their paths use edges in $X$.
By construction,
the congestion in routing $\p$ in each of the edges of $X$
is caused only by the players in $\Pi$.
Since path lengths are
at most $L$,
each player in
$\Pi$ can use at most $L$ edges in $X$.
Hence, $M \leq L \cdot |\Pi|$.
Consequently,
$|X| (C - 2 \lg n) \leq L \cdot |\Pi|$,
which gives:
$C \leq (L \cdot |\Pi|)/|X| + 2 \lg n$.

By the definition of $f(X)$, in the optimal routing $\p^*$
each user in $\Pi$ has to use at least one edge in $f(X)$.
Thus, edges in the optimal routing $\p^*$,
the edges in $f(X)$ are used at least $\Pi$ times.
Thus, there is some edge $e \in f(X)$
with $C_e(\p^*) \geq |\Pi| / |f(X)|$.
Therefore, $C^* \geq |\Pi| / |f(X)|$.
Since $|f(X)| \leq 2|X|$,
we obtain $|\Pi| \leq  2 C^* \cdot|X|$.
Therefore:
$C \leq 2 L C^* + 2 \lg n.$
\end{proof}

\begin{theorem}[Price of anarchy in max games]
\label{theorem:max-anarchy}
For any max game $\R$
it holds that $PoA = O(L + \log n)$.
\end{theorem}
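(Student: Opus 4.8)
The plan is to combine the two preceding lemmas with a straightforward case analysis on the relative sizes of $C$ and $D$ in the Nash-routing $\p$. We want to bound $SC(\p) = \max(C,D)$ against $SC^* = \max(C^*, D^*)$, and it suffices to show $\max(C,D) = O(L + \log n) \cdot \max(C^*, D^*)$, or even just $\max(C,D) = O(L + \log n) \cdot \max(1, C^*)$, since $SC^* \geq 1$ always (any routing with at least one player has social cost at least $1$). First I would observe that $D \leq L$ trivially, because every path in $\paths$ has length at most $L$, so the $D$ term is never the obstruction: $D \leq L \leq L + \log n \leq (L+\log n)\cdot SC^*$. Hence the whole question reduces to bounding $C$.

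For $C$, I would split into two cases according to the hypothesis of Lemma~\ref{theorem:max-upper-bound}. Case 1: $C < D + 2\lg n + 2$. Then since $D \leq L$ we get $C < L + 2\lg n + 2 = O(L + \log n)$, and again this is $O(L+\log n)\cdot SC^*$ since $SC^* \geq 1$. Case 2: $C \geq D + 2\lg n + 2$. Now Lemma~\ref{theorem:max-upper-bound} applies directly and yields $C < 2LC^* + 2\lg n$. Since $C^* \leq SC^*$ and $\lg n = O(\log n) \leq (\log n) SC^*$, this gives $C = O(L C^* + \log n) = O((L + \log n) SC^*)$. In both cases $C = O((L+\log n)\cdot SC^*)$, hence $SC(\p) = \max(C,D) = O((L+\log n)\cdot SC^*)$, and dividing through gives $PoA = SC(\p)/SC^* = O(L + \log n)$. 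Taking the supremum over all Nash-routings $\p$ preserves the bound since the constant is uniform.

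The only mildly delicate point is the bookkeeping to make sure that every term is genuinely absorbed into $O(L + \log n)$ after dividing by $SC^*$: one must use $SC^* \geq 1$ to handle the stray additive $O(\log n)$ and the stray additive $O(L)$ (from the $D \leq L$ bound) in the regime where $C^* = 0$ or $C^*$ is small, and $SC^* \geq C^*$ to handle the multiplicative $LC^*$ term. There is no real obstacle here — all the substantive work has already been done in Lemmas~\ref{theorem:max-expansion} and~\ref{theorem:max-upper-bound} — so this final theorem is essentially a one-paragraph corollary assembling the case split. If one wanted the sharper additive statement hinted at in the introduction ($PoA \leq 2L + O(\log n)$ or similar), one would simply track the constants through Lemma~\ref{theorem:max-upper-bound} more carefully rather than absorbing them into $O(\cdot)$.
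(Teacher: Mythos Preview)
Your proposal is correct and follows essentially the same approach as the paper: the same case split on whether $C \geq D + 2\lg n + 2$, invoking Lemma~\ref{theorem:max-upper-bound} in the affirmative case and the trivial bound $D \leq L$ in the other, together with $SC^* \geq 1$ to absorb additive terms. The only cosmetic difference is that you isolate the bound $D \leq L$ up front and then focus on $C$, whereas the paper bounds $SC(\p)=\max(C,D)$ directly inside each case; the arithmetic is the same.
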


\begin{proof}
Suppose that $\p$ is the worst Nash-routing with maximum social cost.
We have $PoA = SC(\p) / SC(\p^*)$.
If $C \geq D + 2 \lg n + 2$,
then $SC(\p) = C$.
From Theorem \ref{theorem:max-upper-bound},
$PoA
\leq C / \max(C^*, D^*)
\leq (2 L C^* + 2\lg n) / C^*
\leq 2 L + 2 \lg n$.
If $C < D + 2 \lg n + 2$,
then $SC(\p) < D + 2 \lg n + 2$; thus
$PoA \leq L + 2\lg n + 2$.
Hence, in both cases $PoA = O(L + \log n)$.
\end{proof}

There is a max game that shows that
the result of Theorem~\ref{theorem:max-anarchy} is tight
in the worst case.
Consider a ring network with $n$ nodes and $n$ edges.
Give the same orientation to the edges,
so that each edge has one left node and one right node.
For each edge $e_i$, there is a corresponding player $i$
whose source is the left node and the destination is the right node of the edge.
The strategy set of each player has two paths: path $p_i$ which is only the edge $e_i$,
and path $p'_i$ which goes around the ring.
Note that $\p = [p_1, p_2, \ldots, p_n]$ is a Nash-routing
with social cost $1$.
However, $\p' = [p_1, p_2, \ldots, p_n]$ is also a Nash routing
with social cost $n-1$.
Thus, the price of anarchy is $O(n) = O(L)$.

\section{Sum Games}
\label{section:sum-games}

Let $R = (\N,G,\paths)$
be a routing game
such that for any routing $\p$
the social cost function is $SC(\p) = C(\p) + D(\p)$,
and the player cost function $pc_i(\p) = C_i(\p) + D_i(\p)$.
We refer to such routing games as {\em sum games}.
We first show that such games have instances without Nash-routings.
Then we describe a variation of sum games, that we call sum-bucket games,
which are stable and their equilibria have good properties.

\subsection{A Sum Game without Nash-routings}
\label{section:sum-instability}

\begin{theorem}
\label{theorem:sum-instability}
There is sum game instance $\R = (\N,G,\paths)$
that has no Nash-routing.
\end{theorem}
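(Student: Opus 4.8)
The plan is to construct an explicit small network and player set, then argue by exhaustive case analysis that every pure strategy profile admits an improving deviation for some player. Since the claim is a pure existence statement (``there is a game instance''), I don't need generality — I need one clean counterexample where the best-response dynamics provably cycle. The natural source of instability in sum games (as hinted by the introduction) is the tension between $C_i$ and $D_i$: a player on a congested short path wants to move to a longer uncongested path, but once enough players do this, the long path becomes congested and short again looks attractive, while the short path is now empty. So I would look for a gadget with two (or a few) ``parallel'' route bundles of differing lengths, wired so that no allocation of players to bundles is stable.

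**Concretely, first I would** try the smallest candidates. Take a graph with a source $u$ and sink $v$ connected by two internally disjoint paths, a ``short'' path $\sigma$ of length $\ell_s$ and a ``long'' path $\lambda$ of length $\ell_\ell$, and put $N$ identical players each with strategy set $\{\sigma,\lambda\}$. If $x$ players choose $\sigma$ and $N-x$ choose $\lambda$, a player on $\sigma$ has cost $x+\ell_s$ and a player on $\lambda$ has cost $(N-x)+\ell_\ell$. This is a symmetric congestion game, hence it has a pure Nash equilibrium (Rosenthal), so two bundles are not enough — I need asymmetry in strategy sets or a third route, or overlapping edges that couple the bundles. The fix I'd pursue is to give players \emph{different} strategy sets over a shared edge gadget, so that one player's only short option is another player's long option; this breaks the potential-function argument. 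A workable shape: three or four players, each with two paths, arranged cyclically so that player $i$'s short path shares its bottleneck edge with player $i+1$'s long path (indices mod the number of players). Then lowering $i$'s cost raises $i+1$'s, and one chases instability around the cycle.

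**The key steps, in order:** (1) fix the graph $G$, the player set $\N$, and each $\paths_i$ explicitly, choosing the path lengths and the overlap pattern of bottleneck edges so the coupling is cyclic; (2) enumerate all $\prod_i |\paths_i|$ routings — keeping $|\paths_i|=2$ and $N$ small (three or four) makes this $8$ or $16$ cases, collapsible by symmetry to a handful; (3) for each routing, compute $C_i(\p)+D_i(\p)$ for the relevant player and exhibit a strategy $p_i'\in\paths_i$ with strictly smaller $C_i(p_i';\p_{-i})+D_i(p_i';\p_{-i})$; (4) conclude no routing is a Nash-routing. I'd present step (2)–(3) as a short table rather than prose. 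The arithmetic in step (3) is routine once the lengths are pinned down; the design must ensure that in \emph{every} configuration some player strictly prefers switching (no ties that accidentally stabilize), which may require tuning $\ell_s$ versus $\ell_\ell$ and the exact number of players sharing each bottleneck.

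**The main obstacle** is step (1): finding an overlap/length pattern that is genuinely unstable rather than secretly a potential game. Any game where the player cost is an additive function of edge-load terms plus a fixed path-length constant is an exact potential game and \emph{does} have a pure equilibrium — so the instability must exploit that $C_i$ is the \emph{maximum} edge-congestion on the path, not the sum. The design therefore needs a player whose path has two edges with congestions that trade off as others move, so that the ``$\max$'' is non-smooth and no Rosenthal-type potential exists. Getting the bottleneck to switch between two edges of a single player's path, in a way that propagates cyclically and never settles, is the delicate part; once such a gadget is in hand, verifying it is just the finite case check of steps (2)–(4).
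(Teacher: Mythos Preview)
Your plan is correct and matches the paper's approach almost exactly: the paper builds a gadget with three ``active'' players, each having two paths, and then verifies by an eight-row table that in every profile some player has a strictly improving deviation; your diagnosis that the instability must exploit $C_i$ being a \emph{max} over edges (so the game is not a Rosenthal potential game) is precisely the right intuition. The one implementation device the paper uses that you do not mention is four additional ``passive'' players, each with a single-path strategy set, whose sole role is to pin extra congestion on selected edges --- this makes the arithmetic tuning in your step~(1) much easier, and you may want to keep that trick in your toolkit when you actually build the gadget.
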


\begin{proof}
The graph $G$ is depicted in the figure below.
There are seven players, namely, $\N = \{1, \ldots, 7\}$.
Players 1, 2, and 3, have respective strategy sets $\paths_1 = \{p_1, p'_1\}$,
$\paths_2 = \{p_2, p'_2 \}$, and $\paths_3 = \{p_3, p'_3\}$.
In the figure
for player $i = 1, \ldots, 3$ the respective source and destination
nodes are $u_i$ and $v_i$.
There are six critical edges denoted $e_1, \ldots, e_6$
that the paths use and which are shown in the figure as straight horizontal lines.
These edges may have congestion larger than 1.
The squiggly part of the paths are assumed to have congestion 1 and their length
and their lengths are chosen so that the following relations hold:
$|p'_1| = |p_1| - 2$, $|p'_2| = |p_2| + 3$, and $|p'_3| = |p_3| + 3.$
\begin{center}
\resizebox{3.5in}{!}{\begin{picture}(0,0)%
\includegraphics{sum-unstable.pstex}%
\end{picture}%
\setlength{\unitlength}{3947sp}%
\begingroup\makeatletter\ifx\SetFigFont\undefined%
\gdef\SetFigFont#1#2#3#4#5{%
  \reset@font\fontsize{#1}{#2pt}%
  \fontfamily{#3}\fontseries{#4}\fontshape{#5}%
  \selectfont}%
\fi\endgroup%
\begin{picture}(5668,3019)(1179,-3212)
\put(1913,-1569){\makebox(0,0)[rb]{\smash{{\SetFigFont{12}{14.4}{\rmdefault}{\mddefault}{\updefault}$p_1$}}}}
\put(5960,-1552){\makebox(0,0)[lb]{\smash{{\SetFigFont{12}{14.4}{\rmdefault}{\mddefault}{\updefault}$p_1$}}}}
\put(5670,-2576){\makebox(0,0)[lb]{\smash{{\SetFigFont{12}{14.4}{\rmdefault}{\mddefault}{\updefault}$p'_1$}}}}
\put(1888,-2636){\makebox(0,0)[rb]{\smash{{\SetFigFont{12}{14.4}{\rmdefault}{\mddefault}{\updefault}$p'_1$}}}}
\put(2468,-2525){\makebox(0,0)[lb]{\smash{{\SetFigFont{12}{14.4}{\rmdefault}{\mddefault}{\updefault}$p_2$}}}}
\put(3911,-2439){\makebox(0,0)[rb]{\smash{{\SetFigFont{12}{14.4}{\rmdefault}{\mddefault}{\updefault}$p_2$}}}}
\put(3194,-2371){\makebox(0,0)[lb]{\smash{{\SetFigFont{12}{14.4}{\rmdefault}{\mddefault}{\updefault}$p_3$}}}}
\put(4552,-2192){\makebox(0,0)[lb]{\smash{{\SetFigFont{12}{14.4}{\rmdefault}{\mddefault}{\updefault}$p_3$}}}}
\put(3254,-1628){\makebox(0,0)[rb]{\smash{{\SetFigFont{12}{14.4}{\rmdefault}{\mddefault}{\updefault}$p'_3$}}}}
\put(5030,-570){\makebox(0,0)[rb]{\smash{{\SetFigFont{12}{14.4}{\rmdefault}{\mddefault}{\updefault}$p'_3$}}}}
\put(2400,-1791){\makebox(0,0)[lb]{\smash{{\SetFigFont{12}{14.4}{\rmdefault}{\mddefault}{\updefault}$p'_2$}}}}
\put(4091,-561){\makebox(0,0)[rb]{\smash{{\SetFigFont{12}{14.4}{\rmdefault}{\mddefault}{\updefault}$p'_2$}}}}
\put(4031,-937){\makebox(0,0)[b]{\smash{{\SetFigFont{12}{14.4}{\rmdefault}{\mddefault}{\updefault}$p_1 p'_3$}}}}
\put(4970,-937){\makebox(0,0)[b]{\smash{{\SetFigFont{12}{14.4}{\rmdefault}{\mddefault}{\updefault}$p_1 p_2 p_3$}}}}
\put(3075,-945){\makebox(0,0)[b]{\smash{{\SetFigFont{12}{14.4}{\rmdefault}{\mddefault}{\updefault}$p_1 p'_2$}}}}
\put(3075,-2815){\makebox(0,0)[b]{\smash{{\SetFigFont{12}{14.4}{\rmdefault}{\mddefault}{\updefault}$p'_1 p_2$}}}}
\put(4014,-2824){\makebox(0,0)[b]{\smash{{\SetFigFont{12}{14.4}{\rmdefault}{\mddefault}{\updefault}$p'_1 p_3$}}}}
\put(1546,-2064){\makebox(0,0)[rb]{\smash{{\SetFigFont{10}{12.0}{\rmdefault}{\mddefault}{\updefault}$u_1$}}}}
\put(1999,-2064){\makebox(0,0)[rb]{\smash{{\SetFigFont{10}{12.0}{\rmdefault}{\mddefault}{\updefault}$u_2$}}}}
\put(2972,-2055){\makebox(0,0)[rb]{\smash{{\SetFigFont{10}{12.0}{\rmdefault}{\mddefault}{\updefault}$u_3$}}}}
\put(2853,-1236){\makebox(0,0)[b]{\smash{{\SetFigFont{14}{16.8}{\rmdefault}{\mddefault}{\updefault}$e_1$ }}}}
\put(3775,-1244){\makebox(0,0)[b]{\smash{{\SetFigFont{14}{16.8}{\rmdefault}{\mddefault}{\updefault}$e_2$}}}}
\put(4722,-1236){\makebox(0,0)[b]{\smash{{\SetFigFont{14}{16.8}{\rmdefault}{\mddefault}{\updefault}$e_3$}}}}
\put(2836,-3131){\makebox(0,0)[b]{\smash{{\SetFigFont{14}{16.8}{\rmdefault}{\mddefault}{\updefault}$e_4$}}}}
\put(5200,-1219){\makebox(0,0)[b]{\smash{{\SetFigFont{10}{12.0}{\rmdefault}{\mddefault}{\updefault}$(+1)$}}}}
\put(6481,-2055){\makebox(0,0)[lb]{\smash{{\SetFigFont{10}{12.0}{\rmdefault}{\mddefault}{\updefault}$v_1$}}}}
\put(4731,-3131){\makebox(0,0)[b]{\smash{{\SetFigFont{14}{16.8}{\rmdefault}{\mddefault}{\updefault}$e_6$}}}}
\put(3783,-3140){\makebox(0,0)[b]{\smash{{\SetFigFont{14}{16.8}{\rmdefault}{\mddefault}{\updefault}$e_5$}}}}
\put(4961,-2824){\makebox(0,0)[b]{\smash{{\SetFigFont{12}{14.4}{\rmdefault}{\mddefault}{\updefault}$p'_1$}}}}
\put(3305,-3122){\makebox(0,0)[b]{\smash{{\SetFigFont{10}{12.0}{\rmdefault}{\mddefault}{\updefault}$(+3)$}}}}
\put(4253,-3122){\makebox(0,0)[b]{\smash{{\SetFigFont{10}{12.0}{\rmdefault}{\mddefault}{\updefault}$(+3)$}}}}
\put(5201,-3105){\makebox(0,0)[b]{\smash{{\SetFigFont{10}{12.0}{\rmdefault}{\mddefault}{\updefault}$(+4)$}}}}
\put(6459,-643){\makebox(0,0)[lb]{\smash{{\SetFigFont{10}{12.0}{\rmdefault}{\mddefault}{\updefault}$v_2$}}}}
\put(5985,-651){\makebox(0,0)[lb]{\smash{{\SetFigFont{10}{12.0}{\rmdefault}{\mddefault}{\updefault}$v_3$}}}}
\put(5931,-1101){\makebox(0,0)[lb]{\smash{{\SetFigFont{12}{14.4}{\rmdefault}{\mddefault}{\updefault}$p_2$}}}}
\put(5628,-737){\makebox(0,0)[rb]{\smash{{\SetFigFont{12}{14.4}{\rmdefault}{\mddefault}{\updefault}$p_3$}}}}
\end{picture}%
}
\end{center}

Players 4 to 7 are ``passive'' in the sense that they
have only one path in their strategy sets, and their
sole purpose is to create additional congestion on edges $e_3, e_4, e_5, e_6$
(the paths of these players are not shown explicitly in the figure).
In particular, the passive players cause additional congestion $1$ to edge $e_3$,
additional congestion $3$ to $e_4$ and $e_5$, and additional congestion $4$ to $e_6$.
The additional congestion
is depicted in the figure inside a parenthesis under each edge.

Since the only ``active'' players are $1$, $2$, and $3$,
and each player has two path choices,
there are eight possible different routings.
We examine each routing and prove that it is not a Nash-routing.
We use the vector $[p_1,p_2,p_3]$ to denote a routing where
the $i$th position of the vector contains the path choice of user $i$.
By setting explicit values to the path lengths,
and computing the player costs in each routing,
we find that:
player 1 is not locally optimal in routings
$[p_1, p_2,p_3]$ and $[p'_1, p'_2,p'_3]$;
player 2 is not locally optimal in routings
$[p'_1, p_2,p_3]$, $[p_1, p'_2,p'_3]$, $[p_1, p'_2,p_3]$, and $[p'_1, p_2,p'_3]$;
and player 3 is not locally optimal in routings
$[p'_1, p'_2,p_3]$ and $[p_1, p_2,p'_3]$.
\end{proof}

\subsection{Sum-bucket Games}

Here we describe {\em sum-bucket games},
which are variation of sum games that are stable and their equilibria have good properties.
Let $\R = (\N,G,\paths)$ denote a sum-bucket routing game.
The paths in $\paths$ are divided into {\em buckets}
$B_0, \ldots, B_{\lg L}$ so that
$B_k$ is a set that contains all paths
whose lengths are in range $[2^{k}, 2^{k+1})$.
(We use $\lg L$ instead of $\lceil \lg L \rceil$ to avoid notational clutter.)
For any path $p \in \paths$, let $B(p)$ denote the index of the
bucket that $p$ belongs to; namely, if $p \in B_k$, then $B(p) = k$.

Consider now a routing $\p$.
We define the {\em normalized length} of path $p \in \p$
as $\D_p(\p) = 2^{B(p)+1} - 1$
(which is the maximum possible path length in bucket $B(p)$).
Note that all the paths in the same bucket have the same normalized length.
For any path $p$ we define the {\em normalized congestion} $\C_{p}(\p)$ to be the
maximum congestion on any of the edges of path $p$
which is caused by the paths of routing $\p$ which belong to bucket $B(p)$.

Player $i$'s cost is $pc_i(\p) = \C_i(\p) + \D_i(\p)$.
The player is allowed to switch from one bucket to another.
The normalized congestion of $\p$ is $\C(\p) = \max_{p_i \in \p} \C_i(\p)$,
and the normalized length of $\p$ is $\D(\p) = \max_{p_i \in \p} \D_i(\p)$.
The social cost of game $\R$ is $SC(\p) = \C(\p) + \D(\p)$.
Below, we show that sum-bucket games are stable
and then we bound the price of anarchy.

\subsubsection{Existence of Nash-routings in Sum-bucket Games}
\label{section:sum-bucket-stability}

Here we show that sum-bucket games have Nash-routings.
We use the same technique as in Section~\ref{section:max-stability},
where we order the routings and prove that greedy moves 
give smaller order routings.
Let $\R = (\N,G,\paths)$
be a sum-bucket routing game.
Let $r = N + 2L-1$ (this is the maximum possible player cost).
For any routing $\p$
we define
the \emph{routing vector}
$M(\p)=[m_1(\p), \ldots, m_r(\p)]$
such that $m_j(\p)$ is the number of paths in $\p$ with cost $j$.
We define a total order on the routings,
with respect to their vectors,
in exactly the same way as wee did for the max games in Section \ref{section:max-stability}.
Using similar techniques as in the max games, we can prove that
if a greedy move by player $i$ takes
a routing $\p$ to
a new routing $\p'$, then
$\p' < \p$.
Since there are only a finite number of routings,
every best response dynamic converges in a finite time
to a Nash-routing.
Therefore, we get:

\begin{theorem}[Stability of sum-bucket games]
\label{theorem:sum-bucket-stability}
For any sum-bucket game $\R$,
every best response dynamic converges to a Nash-routing.
\end{theorem}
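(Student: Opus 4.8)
The plan is to mirror the argument used for max games in Section~\ref{section:max-stability}, adapting it to the sum-bucket cost function. The three ingredients are: (i) define a total order on routings via the lexicographic comparison of routing vectors, exactly as in Section~\ref{section:max-stability}, where now $M(\p) = [m_1(\p),\ldots,m_r(\p)]$ with $r = N + 2L - 1$ and $m_j(\p)$ the number of paths with cost $j$; (ii) show that any greedy move strictly decreases the routing in this order; and (iii) observe that since there are finitely many routings and the order is total, every descending sequence of best-response moves terminates, necessarily at a routing where no player can improve, i.e.\ a Nash-routing.

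The substantive step is (ii). First I would fix a player $i$ making a greedy move from path $p_i$ to $p'_i$, taking routing $\p$ to $\p'$, with $pc_i(\p) = \C_i(\p) + \D_i(\p) = k$ and $pc_i(\p') = \C_i(\p') + \D_i(\p') = k' < k$. The key point is that the only paths whose costs can change between $\p$ and $\p'$ are: player $i$'s own path, and the paths of other players lying in the \emph{old} bucket $B(p_i)$ (whose normalized congestion may drop because $i$ left) or the \emph{new} bucket $B(p'_i)$ (whose normalized congestion may rise because $i$ arrived). Crucially, because buckets do not interfere with one another in the cost function, no path outside these two buckets is affected. I would then argue that the entry $m_k$ decreases: player $i$ itself leaves cost level $k$, and although some other players in the old bucket $B(p_i)$ may also have had cost exactly $k$ and now drop below it (which only helps), while players entering-bucket $B(p'_i)$ could rise to some level $k'' $, one must check that any such newly-created high costs sit strictly below $k$. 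This is where I expect the main obstacle: a player $j$ already in the new bucket $B(p'_i)$ has, after $i$'s arrival, normalized congestion at most $\C_j(\p) + 1$, and its normalized length is unchanged, so its new cost is at most $pc_j(\p) + 1$; I need this to be $< k$, i.e.\ $pc_j(\p) + 1 \le k - 1$, equivalently $pc_j(\p) \le k - 2$. This is not automatic from the greedy condition alone, so the argument must instead be made at the level of the whole vector: show that for the \emph{largest} index $\ell$ at which $m_\ell(\p)$ and $m_\ell(\p')$ differ, we have $m_\ell(\p') < m_\ell(\p)$. One clean way is to track, for the new bucket, that $i$'s arrival raises the cost of a set $S$ of players from various levels each by exactly $1$ (those whose max-congestion edge is on $p'_i$), $i$ itself lands at level $k' \le \C_i(\p') + \D_i(\p')$, and in the old bucket a set $T$ of players have their cost reduced; comparing the topmost differing coordinate, the highest cost present in $\p'$ among all affected paths is at most $\max(k', 1 + \max_{j\in S} pc_j(\p))$, and since every $j \in S$ shares the edge-congestion structure of bucket $B(p'_i)$, one shows $1 + pc_j(\p) \le pc_i(\p') + 1 \le$ something $< k$, because otherwise $i$ would not have preferred $p'_i$. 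The careful bookkeeping of which coordinate is topmost-differing, and the use of the greedy inequality $k' < k$ together with the fact that $i$'s new congestion in $B(p'_i)$ counts $i$ itself, is the delicate part.

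Once (ii) is established, (iii) is immediate: the routing strictly decreases with every greedy move, the set of routings is finite (a finite network gives a finite game), so best-response dynamics cannot cycle and must halt at a Nash-routing. This gives Theorem~\ref{theorem:sum-bucket-stability}. I would present (ii) as a lemma analogous to Lemma~\ref{theorem:max-greedy}, with the proof emphasizing the bucket-independence of the cost function (only two buckets' worth of players see any change) as the structural fact that makes the vector-decrease argument go through, just as the four affected coordinates did in the max-game proof.
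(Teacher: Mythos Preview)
Your overall structure matches the paper's exactly: define $M(\p)$ with $m_j$ counting players of cost $j$, prove (as the analogue of Lemma~\ref{theorem:max-greedy}) that a greedy move strictly decreases $M$ lexicographically, and conclude by finiteness of the set of routings. So the approach is right.

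The gap is precisely where you flag the ``main obstacle.'' You try to force $pc_j(\p') < k = pc_i(\p)$ via $pc_j(\p') \le pc_j(\p) + 1$ and then $pc_j(\p) \le k-2$; you correctly note the second bound is unavailable, and your fallback (``because otherwise $i$ would not have preferred $p'_i$'') is not the operative mechanism. The paper's resolution (Lemma~\ref{lemma:sum-bucket-greedy}) is sharper and simpler: it shows directly that every player $j$ whose cost rises satisfies $pc_j(\p') \le pc_i(\p') = k'$. Such a $j$ must lie in the \emph{new} bucket $B(p'_i)$, so $\D_j(\p') = \D_i(\p')$. If $\C_j(\p') > \C_i(\p')$ held, then since $j$'s normalized congestion increased only because $p'_i$ entered the bucket, the edge realizing $\C_j(\p')$ must lie on $p'_i$; but that same edge, in the same bucket, witnesses $\C_i(\p') \ge \C_j(\p')$, a contradiction. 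Hence every increased cost lands at a level at most $k' < k$, player $i$ vacates level $k$, and old-bucket players can only decrease. It follows that at the topmost index where $M(\p)$ and $M(\p')$ differ the entry strictly drops, so $\p' < \p$. The shared-edge observation---that the edge witnessing $j$'s new congestion is necessarily on $i$'s new path, and both players are in the same bucket so their edge-congestions coincide there---is the one idea you are missing; once you have it, the ``careful bookkeeping'' you anticipate becomes unnecessary.
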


\subsubsection{Price of Anarchy in Sum-bucket Games}
\label{section:sum-bucket-anarchy}

From Theorem \ref{theorem:sum-bucket-stability},
every sum-bucket game has at least one Nash-routing.
Here, we bound the price of anarchy.
Consider a sum-bucket routing game $\R = (\N,G,\paths)$,
where $G$ has $n$ nodes.
Consider a Nash-routing $\p$.
Denote $\C = \C(\p)$ and $\D = \D(\p)$.
Let $\p^*$ be the optimum (coordinated) routing
with minimum social cost.
Denote $\C^* = \C(\p^*)$
and $\D^* = \D(\p^*)$.
Note that each payer $i \in \N$
has a path in $p_i \in \p$ and a corresponding ``optimal'' path $p^*_i \in \p^*$
from the player's source to the destination node.
For any player $i$, let $s_i$ denote the shortest path in $\paths$ which connects
the source and destinations nodes of $i$.
We now relate the paths lengths with the congestion:

\begin{lemma}
\label{theorem:length-congestion}
In Nash-routing $\p$, for any player $i$ with $\C_i \geq \C - x$,
where $x \geq 0$,
it holds that $|p_i| \leq |s_i| + x + 1$.
\end{lemma}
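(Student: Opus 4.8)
The plan is to derive the bound from the local optimality of player $i$ against a single candidate deviation: rerouting $i$ from $p_i$ onto the shortest path $s_i$ (which we may assume is an available strategy of $i$). Write $\p'=(s_i;\p_{-i})$ for the routing that results, and set $k=B(p_i)$ and $k'=B(s_i)$. Since $s_i$ is shortest we have $|s_i|\le|p_i|$, hence $k'\le k$; and by the definition of the buckets $|p_i|\le\D_{p_i}(\p)=2^{k+1}-1$, while $\D_{s_i}(\p')=2^{k'+1}-1$ agrees with $|s_i|$ up to the factor-two slack of dyadic bucketing.

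The first substantive step is to bound the normalized congestion that player $i$ would incur on $s_i$ in $\p'$. Passing from $\p$ to $\p'$ only deletes $p_i$ from bucket $k$ and inserts $s_i$ into bucket $k'$, so along every edge $e$ of $s_i$ the number of bucket-$k'$ paths through $e$ grows by at most one relative to $\p$. Moreover that number was already at most $\C$ in $\p$: any bucket-$k'$ path using $e$ has normalized congestion at least the bucket-$k'$ congestion of $e$, and no normalized congestion exceeds $\C=\C(\p)$. Hence $\C_{s_i}(\p')\le\C+1$.

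Next I would invoke the Nash condition $pc_i(\p)\le pc_i(\p')$, that is $\C_i(\p)+\D_{p_i}(\p)\le\C_{s_i}(\p')+\D_{s_i}(\p')$. Substituting the hypothesis $\C_i(\p)\ge\C-x$ together with the bound $\C_{s_i}(\p')\le\C+1$ yields $\D_{p_i}(\p)-\D_{s_i}(\p')\le x+1$; that is, player $i$'s normalized length exceeds that of $s_i$ by at most $x+1$. Finally, converting normalized lengths back to true lengths via $|p_i|\le\D_{p_i}(\p)$ and the comparability of $\D_{s_i}(\p')$ with $|s_i|$ gives the claimed inequality $|p_i|\le|s_i|+x+1$.

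The step I expect to be most delicate is the congestion estimate $\C_{s_i}(\p')\le\C+1$: one must treat simultaneously the bucket that $i$ leaves and the bucket that $i$ enters (which may coincide), and use carefully that normalized congestion is a per-bucket quantity which is capped by $\C$ only along edges actually traversed by paths of that bucket. A secondary point requiring care is the last passage from normalized to actual lengths, where the factor-two slack of the bucketing has to be absorbed; isolating the clean normalized inequality $\D_{p_i}(\p)-\D_{s_i}(\p')\le x+1$ before reverting to true lengths is what keeps that under control.
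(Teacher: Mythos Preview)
Your approach is essentially identical to the paper's: deviate to the shortest available path $s_i$, bound the new normalized congestion by $\C+1$, and compare costs via the Nash condition. Your intermediate inequality $\D_{p_i}(\p)-\D_{s_i}(\p')\le x+1$ is exactly what the paper's contradiction argument encodes.

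The one step you correctly flag as delicate --- converting normalized lengths back to true lengths --- is in fact a genuine gap, and the paper's own proof shares it. From $|p_i|\le\D_{p_i}(\p)\le\D_{s_i}(\p')+x+1$ you would need $\D_{s_i}(\p')\le|s_i|$, but the bucketing only gives $\D_{s_i}(\p')=2^{B(s_i)+1}-1\le 2|s_i|-1$ (since $|s_i|\ge 2^{B(s_i)}$). So the argument actually yields $|p_i|\le 2|s_i|+x$, not the stated $|p_i|\le|s_i|+x+1$; the factor-two slack cannot simply be ``absorbed'' here. The paper writes this step as ``$pc'_i\le\C+1+2^{B(s_i)+1}-1$, therefore $pc'_i\le\C+1+|s_i|$'', which silently reverses the inequality $|s_i|\le 2^{B(s_i)+1}-1$ it just asserted. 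Fortunately every downstream use of the lemma (in Lemma~\ref{theorem:sum-bucket-upper-bound} and in the proof of Theorem~\ref{theorem:sum-bucket-anarchy}) only needs $|p_i|=O(|s_i|+x)$, so the weaker bound your argument actually establishes is enough for the paper's purposes.
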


\begin{proof}
Suppose for the sake of contradiction that there is a player $i$ with $|p_i| > |s_i| + x + 1$.
Then,
$pc_i = \C_i + \D_i \geq \C - x + \D_i
\geq \C - x + |p_i| > \C - x + |s_i| + x + 1= \C + |s_i| + 1.$
Clearly, $|s_i| \leq 2^{B(s_i) + 1} - 1$.
If user $i$ was to switch to path $s_i$
its cost would be $pc'_i \leq \C + 1 + 2^{B(s_i) + 1} - 1$,
since the normalized length of $s_i$ is $2^{B(s_i) + 1} - 1$,
and $s_i$ has normalized congestion at most $\C$
before player $i$ switches,
and the normalized congestion of path $s_i$
increases to at most $\C+1$ after player $i$ switches to it.
Therefore,
$pc'_i \leq \C + 1 + |s_i| < pc_i$.
Thus, in $\p$ player $i$ would not be optimal,
which is a contradiction, since $\p$ is a Nash-routing.
Therefore, $|p_i| \leq |s_i| + x + 1$, as needed.
\end{proof}

For each edge $e \in G$ let $\Pi_e(\p)$ denote the set of players
whose paths in routing $\p$ use edge $e$.
Let $\C_{e,p_i}(\p)$ denote the number of packets that use edge $e$ in $\p$
and are in the same bucket as $p_i$.
Let $\C_e(\p) = \max_{p_i} C_{e,p_i}(\p)$
denote the normalized congestion of edge $e$.
For any edge $e$, we define $f(e,i)$ to be a set that contains all edges $e' \in p^*_i$
with $\C_{e',p^*_i}(\p) \geq \C_{e,p_i}(\p) - \D^*$.
Let $f(e) = \cup_{i \in \Pi_e(\p)} f(e,i)$,
and for any set of edges $X$,
$f(X) = \bigcup_{e \in X} f(e)$.
It can be shown that in Nash-routing $\p$ it holds $|f(e,i)| \geq 1$.

\begin{lemma}
\label{theorem:sum-bucket-expansion}
Let $Z$ be the set that contains all
edges $e$ with congestion $\C_e(\p) \geq \C - 2 \D^* \cdot \lg n$.
There is a set of edges $X \subseteq Z$
with $|f(X)| \leq 2|X|$.
\end{lemma}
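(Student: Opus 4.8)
The plan is to adapt the proof of Lemma~\ref{theorem:max-expansion} essentially line by line, the only structural change being that one application of $f$ can now decrease the normalized congestion of an edge by as much as $\D^*$ rather than by $1$. I would build a nested chain of edge sets $E_0 \subseteq E_1 \subseteq \cdots \subseteq E_{2\lg n}$ by setting $E_0$ to be the set of all edges $e$ with $\C_e(\p) = \C$ and $E_j = E_{j-1} \cup f(E_{j-1})$. Since $\C = \C(\p) = \max_{p_i \in \p} \C_i(\p)$ is attained on some edge of some path, $E_0 \neq \emptyset$, and the already-established fact $|f(e,i)| \geq 1$ guarantees the recursion does not collapse. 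By definition each $e' \in f(e,i)$ satisfies $\C_{e',p^*_i}(\p) \geq \C_{e,p_i}(\p) - \D^*$, and since $\C_{e'}(\p) \geq \C_{e',p^*_i}(\p)$, an inductive argument gives $\C_e(\p) \geq \C - j\D^*$ for every $e \in E_j$; because $j \le 2\lg n$ this is at least $\C - 2\D^*\lg n$, so $E_j \subseteq Z$ for all $0 \le j \le 2\lg n$. This is precisely the reason the threshold defining $Z$ is $\C - 2\D^*\lg n$.

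With the chain in place, I would apply the standard counting dichotomy. Suppose toward a contradiction that $|f(E_j)| > 2|E_j|$ for every $j$ with $0 \le j \le 2\lg n$. Then, since $f(E_{j-1}) \subseteq E_j$, we have $|E_j| \geq |f(E_{j-1})| > 2|E_{j-1}|$ for each $1 \le j \le 2\lg n$, hence $|E_{2\lg n}| > 2^{2\lg n}|E_0| \geq 2^{2\lg n} = n^2$. This is impossible, since $G$ has at most $n^2$ edges. Therefore there exists $j$ with $|f(E_j)| \leq 2|E_j|$, and taking $X = E_j \subseteq Z$ proves the lemma.

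The step I expect to require the most care is the inductive claim that each application of $f$ costs at most $\D^*$ in normalized congestion, i.e.\ that the $E_j$ genuinely stay inside $Z$. The obstacle is that $\C_e(\p)$ is a maximum over buckets, whereas $f(e) = \bigcup_{i \in \Pi_e(\p)} f(e,i)$ unions over all players crossing $e$ --- including players whose own bucket contributes little congestion on $e$, for whom $f(e,i)$ can contain edges of very low (even zero) normalized congestion. To keep the recursion inside $Z$ one has to follow, at each edge of $E_{j}$, a bucket (and a player in it) realizing its near-maximal normalized congestion, and verify that restricting attention to those contributions is still compatible with the bound $|f(X)| \leq 2|X|$ for the $f$ of the statement; this bucket-indexed bookkeeping is the main new ingredient relative to Lemma~\ref{theorem:max-expansion}, together with the input fact $|f(e,i)| \geq 1$ in a Nash-routing. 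The floor/ceiling implicit in $2\lg n$ and the bound $\binom{n}{2} < n^2$ on the edge count I would handle informally, as the paper does elsewhere.
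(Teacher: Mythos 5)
Your proposal follows the paper's proof essentially step for step: the same recursion $E_j = E_{j-1}\cup f(E_{j-1})$ starting from the edges of congestion $\C$, the same doubling/counting contradiction against the $n^2$ bound on the number of edges, and the same induction showing each application of $f$ loses at most $\D^*$ in normalized congestion (via $\C_{e'}(\p)\ge \C_{e',p^*_i}(\p)$), which is exactly why $Z$ is defined with threshold $\C - 2\D^*\lg n$. The subtlety you flag---that $f(e)$ is a union over \emph{all} players crossing $e$, so a player $i$ with small $\C_{e,p_i}(\p)$ can put low-congestion edges into $f(e)$---is genuine, but the paper's own proof does not handle it more carefully than you do: its induction bounds only the edges in $f(e,i)$ for a player $i$ realizing $\C_e(\p)$ and then asserts the bound for all of $f(E_t)$, so on this point your attempt matches (and is more candid than) the published argument.
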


\begin{proof}
We recursively construct a sets of edges $E_0, \ldots, E_{2\lg n}$,
such that $E_i = E_{i-1} \cup f(E_{i-1})$,
and set $E_0$ contains
all the edges $e$ with congestion $C_e(\p) = \C$.
We can show that there is a $j$, $0 \leq j \leq 2\lg n$,
such that $|f(E_j)| \leq 2 |E_j|$.
Suppose for contradiction that such a $j$ does not exist.
Thus for all $j$, $0 \leq j \leq 2\lg n$,
it holds that $|f(E_j)| > 2|E_j|$.
In this case, it it straightforward to show that $|E_k| > 2|E_{k-1}|$,
for any $1 \leq k \leq 2\lg n$.
Since $|E_0| \geq 1$,
it holds that $|E_{2\lg n}| > 2^{2\lg n} = n^2$.
However, this is a contradiction, since
the number of edges in $G$ cannot exceed $n^2$.
Thus, there is a $j$ with $|f(E_j)| > 2|E_j|$.
We will set $X = E_j$.

It only remains to show that $X \subseteq Z$.
It suffices to show that for any $E_k$ and $e \in E_k$, $\C_{e}(\p) \geq \C-k \D^*$,
where $0 \leq k \leq 2\lg n$.
We prove this by induction on $k$.
For $k = 0$ we have that every edge in $e$ has $\C_e(\p) = \C = \C - 0 \cdot \D^*$,
thus the claim trivially holds.
For the induction hypothesis, suppose that the claim holds for any $k = t < 2 \log n$.
In the induction step we will prove that the claim holds also for $k = t+1$.
We have that $E_{t+1} = E_{t} \cup f(E_{t})$.
By induction hypothesis,
for any $e \in E_t$, $\C_{e}(\p) \geq \C - t \D^*$
(thus, for any $e \in E_t$ it holds that  $\C_{e}(\p) \geq \C - (t+1)D^*)$).
Thus, for any $e \in E_k$ there is at least one path $p_i$
with $\C_{e,p_i}(\p) \geq \C - t\D^*$.
By definition of $f(e,i)$,
every edge $e \in f(e,i)$ has the property that
$\C_{e',p^*_i}(\p) \geq \C_{e,p_i}(\p) - \D^* \geq \C - (t+1)\D^*$.
Thus, there is at least one path $p' \in \p$
which is in the same bucket with $p^*_i$ (namely, $B(p') = B(p^*_i)$)
for which it holds that $\C_{e',p'}(\p) = \C_{e',p^*_i}(\p)$.
Therefore, $\C_{e'}(\p) \geq  \C - (t+1)\D^*$.
Consequently, from the definition of $f(E_{t})$
it follows that for any edge $e' \in f(E_{t})$ it holds that $\C_{e'}(\p) \geq  \C - (t+1)\D^*$.
By considering the union of $E_{t} \cup f(E_{t})$,
we have that the claim holds for any edge in $e \in E_{t+1}$, as needed.
\end{proof}

\begin{lemma}
\label{theorem:sum-bucket-upper-bound}
In Nash-routing $\p$ it holds that $\C \leq 18 \C^* \cdot \D^* \cdot \lg^2 n$.
\end{lemma}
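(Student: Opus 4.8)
The plan is to run the double-counting scheme of Lemma~\ref{theorem:max-upper-bound}, modified for normalized (per-bucket) congestion and sharpened by Lemma~\ref{theorem:length-congestion}; it is the latter that replaces the worst-case length $L$ by $\D^*$ in the end. First I would dispose of the trivial case: if $\C \le 4\D^*\lg n$, then since $\C^* \ge 1$ and $\lg n \ge 1$ the claimed bound already holds, so I may assume $\C > 4\D^*\lg n$. Let $Z$ and the expansion map $f$ be as in Lemma~\ref{theorem:sum-bucket-expansion}, and fix the set $X \subseteq Z$ with $|f(X)| \le 2|X|$ that lemma provides (so $|X| \ge 1$, since $X$ contains the edges of maximum normalized congestion).

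For each $e \in X$ fix a bucket $k_e$ attaining $\C_e(\p)$, and let $\Pi$ be the set of players $i$ for which $e \in p_i$ and $B(p_i) = k_e$ for at least one $e \in X$; for such an $i$ I call any such $e$ a \emph{witness edge} of $i$. Counting incidences between edges of $X$ and paths of $\p$ lying in the corresponding witness bucket in two ways gives $|X|(\C - 2\D^*\lg n) \le \sum_{e \in X}\C_e(\p) = \sum_{i \in \Pi}|\{e \in X : e \in p_i,\ B(p_i)=k_e\}| \le \sum_{i \in \Pi}|p_i|$. Now any $i \in \Pi$ has, through a witness edge $e$, normalized congestion $\C_i(\p) \ge \C_{e,p_i}(\p) = \C_e(\p) \ge \C - 2\D^*\lg n$, so Lemma~\ref{theorem:length-congestion} applied with $x = 2\D^*\lg n$ gives $|p_i| \le |s_i| + 2\D^*\lg n + 1 \le \D^* + 2\D^*\lg n + 1 \le 4\D^*\lg n$, where $|s_i| \le |p^*_i| \le \D^*$ (the shortest path is no longer than $p^*_i$, whose true length is at most its normalized length $\le \D^*$) and $n \ge 2$. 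Hence $|X|(\C - 2\D^*\lg n) \le 4\D^*\lg n\cdot|\Pi|$.

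To bound $|\Pi|/|X|$ through the optimal routing, note that for each $i \in \Pi$ the property $|f(e,i)| \ge 1$ at a witness edge $e$ supplies an edge of $f(X)$ lying on $p^*_i$; thus in $\p^*$ the players of $\Pi$ account for at least $|\Pi|$ path--edge incidences onto $f(X)$, so some $e' \in f(X)$ lies on at least $|\Pi|/|f(X)|$ of these optimal paths. A further pigeonhole over the at most $\lg L + 1 \le 2\lg n$ buckets of $\p^*$ isolates a single bucket witnessing $\C^* \ge \C_{e'}(\p^*) \ge |\Pi|/(|f(X)|\cdot 2\lg n) \ge |\Pi|/(4|X|\lg n)$, i.e.\ $|\Pi| \le 4\C^*\lg n\cdot|X|$. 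Substituting into the previous inequality and dividing by $|X|$ yields $\C - 2\D^*\lg n \le 16\C^*\D^*\lg^2 n$, and since $2\D^*\lg n \le 2\C^*\D^*\lg^2 n$ we conclude $\C \le 18\C^*\D^*\lg^2 n$.

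The main obstacle, relative to the max game, is the bookkeeping forced by normalized congestion: the double count must be anchored to a witness bucket on every edge of $X$, and on the optimal side it must survive a pigeonhole over the $O(\log n)$ buckets, which is responsible for one of the two $\log n$ factors. The second crucial point is that the crude bound $|p_i| \le L$ that suffices in Lemma~\ref{theorem:max-upper-bound} must here be replaced by $|p_i| = O(\D^*\log n)$, obtained from Lemma~\ref{theorem:length-congestion}; this substitution is exactly what converts an $O(L\,\C^*)$-type estimate into the claimed $O(\C^*\D^*\log^2 n)$ bound. (Reducing first to the case $\C > 4\D^*\lg n$ is a minor convenience ensuring that every congestion value entering the argument exceeds $\D^*$, so that all the sets $f(e,i)$ that arise are nonempty.)
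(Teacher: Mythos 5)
Your proposal is correct and follows essentially the same route as the paper's own proof: the expansion set $X$ from Lemma~\ref{theorem:sum-bucket-expansion}, the double count of congestion on $X$ against path lengths bounded via Lemma~\ref{theorem:length-congestion} (giving $|p_i| \le 4\D^*\lg n$), and the pigeonhole over the $O(\log n)$ buckets on the optimal side, combined exactly as in Equations~\ref{eqn:1} and~\ref{eqn:2}. Your witness-bucket bookkeeping for $\Pi$ and the reduction to $\C > 4\D^*\lg n$ are only minor refinements of the same argument (indeed they make the step ``$\C_i(\p) \ge \C - 2\D^*\lg n$ for $i \in \Pi$'' slightly more careful than the paper's phrasing), not a different approach.
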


\begin{proof}
From Lemma \ref{theorem:sum-bucket-expansion},
there is a set of edges $X \subset Z$ with $|f(X)| \leq 2|X|$.
For each $e \in Z$ it holds that $\C_e(\p) \geq \C - 2 \D^* \cdot \lg n$.
Let $\Pi = \bigcup_{e \in X} \Pi_e(\p)$,
that is, $\Pi$ is the set of players which in routing $\p$
their paths use edges in $X$.
Let $M = \sum_{e \in X} \C_e(\p)$,
which denotes the total ``normalized'' utilization of the
edges in $X$.
We have that $M \geq |X| (\C - 2 \D^* \cdot \lg n)$.

By construction,
the congestion in routing $\p$ in each of the edges of $X$
is caused only by the players in $\Pi$.
We can bound the path lengths of the players in $\Pi$ with respect to $\D^*$ as follows.
Consider a player $i \in \Pi$.
We have that $C_i(\p) \geq \C - 2 \bar D^* \cdot \lg n$.
From Lemma \ref{theorem:length-congestion} and the fact that $|s_i| \leq D^*$,
we obtain:
$|p_i| \leq |s_i| + 2 \D^* \cdot \lg n + 1
\leq \D^* + 2 \D^* \cdot \lg n + 1
\leq 4 \D^* \cdot \lg n.$
Thus, the path length of every player in $\Pi$ is at most $K = 4 \D^* \cdot \lg n$.

$M$ can also be bounded as $M \leq K |\Pi|$.
Consequently,
$|X| (\C - 2 \D^* \cdot \lg n) \leq  K |\Pi|$,
which gives:
\begin{equation}
\label{eqn:1}
\C \leq \frac {K |\Pi|} {|X|} + 2 \D^* \cdot \lg n
= \frac {4 \D^* \cdot \lg n \cdot |\Pi|} {|X|} + 2 \D^* \cdot \lg n.
\end{equation}
Since $|f(e,i)| \geq 1$,
in the optimal routing $\p^*$
the path of each user in $\Pi$ has to use at least one edge in $f(X)$.
Thus, in the optimal routing $\p^*$,
the edges in $f(X)$ are used at least $|\Pi|$ times.
Thus, there is some edge $e \in f(X)$
which in $\p^*$ is used by at least $|\Pi| / |f(X)|$ paths.
Since there are $\lg L + 1$ buckets,
the normalized congestion of $e$ in one of those buckets is at least
$|\Pi| / (|f(X)| \cdot (\lg L + 1))$.
Therefore, $\C^* \geq |\Pi| / (|f(X)| \cdot (\lg L + 1))$.
Since $|f(X)| \leq 2|X|$,
we obtain:
\begin{equation}
\label{eqn:2}
|\Pi| \leq  2 \C^* \cdot|X| \cdot (\lg L + 1) \leq 4 \C^* \cdot|X| \cdot \lg n.
\end{equation}
By Combining Equations \ref{eqn:1} and \ref{eqn:2}, we get:
$ \C \leq 16 \C^* \cdot \D^* \cdot \lg^2 n + 2 \D^* \lg n
\leq 18 \C^* \cdot \D^* \cdot \lg^2 n.$
\end{proof}

When $\C > \D/4$, using Lemma \ref{theorem:sum-bucket-upper-bound}
it is straightforward to prove that $PoA = O( {(\C^* \cdot \D^* \cdot \lg^2 n )} / {(\C^* + \D^*)})$.
If $\C \leq \D/4$,
then using Lemma \ref{theorem:length-congestion},
we can prove that $PoA = O(1)$ (the details can be found in the appendix).
Therefore, we obtain the main result: 

\begin{theorem}[Price of anarchy in sum-bucket games]
\label{theorem:sum-bucket-anarchy}
For any sum-bucket game $\R$
it holds:
$$PoA = O\left(\frac {\C^* \cdot \D^* } {\C^* + \D^*} \cdot \lg^2 n\right).$$
\end{theorem}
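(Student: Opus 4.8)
The plan is to split into two cases according to whether the normalized congestion $\C$ of the worst Nash-routing $\p$ dominates its normalized length $\D$, and in each case bound the ratio $SC(\p)/SC(\p^*) = (\C+\D)/(\C^*+\D^*)$. I would first record the trivial lower bounds on the optimum: since every player must traverse a path of length at least $|s_i|$, and $\D^* \geq \D_i^* \geq |s_i|$ for the player achieving the largest normalized length in $\p^*$, we also have $\D^* \geq 1$ and $\C^* \geq 1$. Hence $\C^* + \D^* \geq \max(\C^*, \D^*) \geq \tfrac{1}{2}(\C^* + \D^*)$, and more usefully $\tfrac{\C^*\D^*}{\C^*+\D^*} \geq \tfrac{1}{2}\min(\C^*,\D^*) \geq \tfrac12$, so an $O(1)$ bound is always subsumed by the claimed bound.

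\emph{Case 1: $\C > \D/4$.} Here $SC(\p) = \C + \D < 5\C$, so it suffices to bound $\C/(\C^*+\D^*)$. Lemma~\ref{theorem:sum-bucket-upper-bound} gives $\C \leq 18\,\C^*\D^*\lg^2 n$. Dividing by $\C^* + \D^*$ directly yields $SC(\p)/SC(\p^*) = O\!\left(\tfrac{\C^*\D^*}{\C^*+\D^*}\lg^2 n\right)$, since $\C^* + \D^* \geq \D^* $ and $\geq \C^*$ let us keep whichever of the two factors in the numerator we wish — concretely, $\tfrac{\C^*\D^*}{\C^*+\D^*}$ is exactly the quantity appearing in the bound, so no further manipulation is needed.

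\emph{Case 2: $\C \leq \D/4$.} Now $SC(\p) = \C+\D \leq \tfrac54 \D$, so it suffices to bound $\D/(\C^*+\D^*)$. Let $p_i$ be a path in $\p$ of maximum normalized length, so $\D = \D_i$ and $|p_i| \geq 2^{B(p_i)} = \tfrac{1}{2}(\D_i+1) \geq \D/4$. The player $i$ owning $p_i$ has $\C_i \leq \C \leq \D/4 \leq |p_i|$, so $\C_i \geq \C - x$ with $x = \C - \C_i \leq \C$; applying Lemma~\ref{theorem:length-congestion} with this $x$, $|p_i| \leq |s_i| + x + 1 \leq |s_i| + \C + 1$. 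Since the optimal path for player $i$ has length at least $|s_i|$ and lies in some bucket, $\D^* \geq |s_i|$ (indeed $\D^* \geq D_i^* \geq |p_i^*| \geq |s_i|$), and trivially $\C^* \geq 1$. Combining, $\D \leq 4|p_i| \leq 4(|s_i| + \C + 1) \leq 4\D^* + 4\C + 4 \leq 4\D^* + \D + 4$ — which would give nothing — so instead I would use the sharper form $|p_i| \leq |s_i| + \C_i + 1$ is not quite what Lemma~\ref{theorem:length-congestion} states; rather I should take $x$ as small as possible. The clean route: since $\C_i \leq \C$, write $x = \C - \C_i \geq 0$, get $|p_i| \leq |s_i| + \C - \C_i + 1$, hence $\D = \D_i \leq 2|p_i|+1 \leq 2|s_i| + 2(\C-\C_i) + 3 \leq 2\D^* + 2\C + 3$. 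Then $\D/2 \leq \D^* + \C + 2 \leq \D^* + \D/4 + 2$, so $\D/4 \leq \D^* + 2 \leq 3\D^*$, giving $\D = O(\D^*)$ and therefore $SC(\p) = O(\D^*) = O(\C^*+\D^*) = O(1)\cdot(\C^*+\D^*)$, which is absorbed into the stated bound by the remark above.

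\textbf{Main obstacle.} The delicate point is Case 2: one must apply Lemma~\ref{theorem:length-congestion} with a slack parameter $x$ chosen correctly (equal to $\C - \C_i$, not to $2\D^*\lg n$ as in Lemma~\ref{theorem:sum-bucket-upper-bound}), and then use $\C \leq \D/4$ to \emph{absorb} the resulting $\C$ term back into $\D$ — this is what collapses the bound to $O(1)$ rather than leaving a spurious $\C^*\D^*$ factor. The bookkeeping converting between true length $|p_i|$ and normalized length $\D_i = 2^{B(p_i)+1}-1$ (a factor-$2$-ish slack on each side) must be tracked, but it only affects constants. Case 1 is essentially a one-line consequence of Lemma~\ref{theorem:sum-bucket-upper-bound}.
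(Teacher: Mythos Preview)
Your proposal is correct and follows essentially the same approach as the paper: the same $\C > \D/4$ versus $\C \leq \D/4$ case split, with Case~1 handled directly by Lemma~\ref{theorem:sum-bucket-upper-bound} and Case~2 handled by applying Lemma~\ref{theorem:length-congestion} to a player realizing $\D$, using $|s_i| \leq \D^*$ and $\C \leq \D/4$ to absorb the slack and conclude $\D = O(\D^*)$, hence $PoA = O(1)$. The only cosmetic differences are that the paper picks $p_i$ as the path of maximum \emph{cost} (then argues $\D_i \geq 3\D/4$) rather than maximum normalized length, and applies Lemma~\ref{theorem:length-congestion} with the cruder $x = \C$; your false start with $\D \leq 4|p_i|$ should be excised, but your second pass with $\D_i \leq 2|p_i|+1$ is exactly the paper's arithmetic up to constants.
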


There is a sum-bucket game $\R = (\N, G, \paths)$
that shows that the result of Theorem \ref{theorem:sum-bucket-anarchy}
is tight in non-trivial cases.
All the players have the same source node $u$ and destination node $v$.
Let $a = \sqrt N$ (suppose for simplicity that $a = \sqrt N = \lceil \sqrt N \rceil$).
There is a path $p$ of length $a$ from $u$ to $v$.
There are $a$ edge-disjoint paths $Q = \{ q_1, \ldots, q_a \}$ from
$u$ to $v$ so that each path $q_i$ has length $a$ and uses one edge of $p$.
Each player has two paths in her strategy set: one is path $p$ and
the other is a path from $Q$.
Further, for each path $q_i$ there are $a$ players that have $q_i$ in their strategy sets.
Let $\p$ be the routing where every player chooses path $p$.
Then, $\p$ is a Nash-routing with social cost $a + N$.
Let $\p'$ be the routing where every player chooses the alternative path in $Q$.
Then, $\p'$ is also a Nash-routing with social cost $2a$.
Thus, $PoA \geq SC(\p') / SC(\p) \geq (a+N) / (a+1) = \Omega(\sqrt N) = \Omega(\sqrt n)$.
Routing $\p'$ is an optimum routing with the smallest social cost $\C^* = \C(\p) = a$
and $\D^* = D(\p) = a$.
Thus, from Theorem \ref{theorem:sum-bucket-anarchy},
$PoA = O((\C^* \cdot \D^* \cdot \lg^2 n)/ (\C^* + \D^*)) = O(a \cdot \lg^2 n) = O(\sqrt n \cdot \log^2 n)$.
Hence, the price of anarchy has to be
within a $\log^2 n$ factor from the bound provided in Theorem \ref{theorem:sum-bucket-anarchy}.

\section{Conclusions}
\label{section:conclusions}

In this work we provided the first study (to our knowledge) of bicriteria routing games,
where the players attempt to simultaneously optimize two parameters:
their path congestion and length.
The motivation is the existence of efficient 
packet scheduling algorithms which deliver the packets 
in time proportional to the social cost.
We examined max games  
and sum games.
Max games stabilize, but their price of anarchy is high.
Sum games do not stabilize, but they can give better price of anarchy.
We then give the approximate sum-bucket games which always stabilize 
and preserve the good properties of sum games.
Surprisingly, arbitrary sum-bucket game equilibria provide good
approximations to the original coordinated routing problem.

Several open problems remain to examine. We studied two particular functions
of the bicriteria, namely, the max and the sum functions. There are other functions, 
for example a weighted sum, that could provide similar or better results.
It would also be interesting to add additional parameters.
The original $C+D$ sum games do not stabilize in general.
However, there exist interesting instances which stabilize. 
For example, it can be easily shown that the games where the available paths have equal
lengths always stabilize. 
It would be interesting to find a general 
characterization of the game instances that stabilize. 
Another interesting problem is to provide time efficient 
algorithms for finding equilibria in our games.

\newpage
{
\bibliographystyle{plain}
\bibliography{mypapers,masterbib,routing,oblivious,game}
}

\newpage
\pagenumbering{roman}
\appendix
\section{Additional Proofs of Section \ref{section:max-stability}}

\begin{lemma}
\label{theorem:max-minimum-optimal}
Minimum routing $\pm$ achieves optimal social cost,
that is, $SC(\pm) \leq SC(\p)$,
for any other routing $\p \neq \pm$.
\end{lemma}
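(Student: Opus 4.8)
The plan is to exploit the relationship between the social cost and the routing vector that was already noted in the text: if $SC(\p) = k$, then $m_k(\p) \neq 0$ while $m_{k'}(\p) = 0$ for every $k' > k$. In other words, $SC(\p)$ is exactly the largest index at which $M(\p)$ has a nonzero entry. This is the only fact about $SC$ that the argument needs; in particular the proof will not even use that $\pm$ is a Nash-routing, only that it is the minimum element of the total order on routings.

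I would argue by contradiction on the minimality of $\pm$. Suppose some routing $\p' \neq \pm$ had $SC(\p') < SC(\pm)$; I will show $\p' < \pm$. Write $k = SC(\pm)$ and $k' = SC(\p')$, so $k' < k$. By the observation above, $m_k(\pm) \neq 0$ and $m_j(\pm) = 0$ for all $j > k$; likewise $m_j(\p') = 0$ for all $j > k'$, and since $k > k'$ this gives in particular $m_k(\p') = 0$ and $m_j(\p') = 0$ for all $j > k$. Hence $m_j(\pm) = m_j(\p')$ for every $j > k$, while at index $k$ we have $m_k(\p') = 0 < m_k(\pm)$. By the definition of the order on routing vectors this means $M(\p') < M(\pm)$, i.e. $\p' < \pm$, contradicting the fact that $\pm$ is the minimum routing. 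Therefore no routing attains strictly smaller social cost than $\pm$, which is exactly the claim $SC(\pm) \leq SC(\p)$ for all $\p$.

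I do not expect a genuine obstacle here; the one point that deserves a sentence of care is the auxiliary observation itself. For $k' > SC(\p)$ we have $a_{k'}(\p) = 0$ because the path-congestion of any path never exceeds the network congestion $C(\p) \leq SC(\p)$, and $b_{k'}(\p) = 0$ because no path is longer than $D(\p) \leq SC(\p)$; and $m_k(\p) \geq 1$ because whichever of $C(\p), D(\p)$ realizes the maximum $k$ is witnessed either by some path whose path-congestion equals $k$ or by some path of length $k$. With that in hand the order comparison in the previous paragraph is immediate, and the lemma follows.
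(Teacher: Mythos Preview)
Your proof is correct and follows essentially the same approach as the paper's own argument: assume a routing with strictly smaller social cost exists, use the observation that $SC(\p)$ is the highest nonzero index of $M(\p)$ to deduce $M(\p') < M(\pm)$, and derive a contradiction with the minimality of $\pm$. Your added justification of the auxiliary observation about $a_{k'}$ and $b_{k'}$ is more explicit than the paper's, but the overall structure is identical.
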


\begin{proof}
Suppose for contradiction that there exists a routing $\p \geq \pm$
with $SC(\p) < SC(\pm)$.
Let $SC(\p) = \max(C(\p),D(\p)) = k_1$,
and $SC(\pm) = \max(C(\pm), D(\pm)) = k_2$.
Clearly, $k_1 < k_2$.
Therefore,
in the vector $M(\p) = [m_1,\ldots,m_r]$ it holds that
$m_{k_1} \neq 0$ and $m_k = 0$ for $k > k_1$.
Similarly,
in the vector $M(\pm) = [{\widehat m}_1,\ldots, {\widehat m}_r]$
it holds that ${\widehat m}_{k_2} \neq 0$ and ${\widehat m}_k = 0$ for $k > k_2$.
Therefore, $M(\pm) > M(\p)$,
contradicting the fact that $\p \geq \pm$.
\end{proof}

\section{Additional Proofs of Section \ref{section:sum-instability}}

In the table below
we calculate the congestions, path lengths, and player costs for
each routing scenario of the game in Theorem \ref{theorem:sum-instability}.
We set the specific lengths as:
$|p_1| = 10$, $|p'_1| = 8$, $|p_2| = 7$, $|p'_2| = 10$, $|p_3|  = 7$, and $|p'_3| = 10$.
For a player $i \in \{ 1, 2, 3 \}$ and routing $\p$ we define the {\em complementary routing}
to be the one where player $i$ chooses the alternative path.
For example, for player $2$ the complementary
routing of $[p_1,p_2,p_3]$ is $[p_1, p'_2, p_3]$.
A player is locally optimal in a routing if
the complementary routing does not give a lower cost for the player.
Using the table it is easy to determine whether a player
is locally optimal or not by examining the respective costs in the
complementary routings.
In this way, we find the non-locally players which are shown
in the rightmost column of the table.
\begin{center}
\begin{tabular}{|l||l|l|l||l|l|l||l|l|l||c|}
  \hline
routing            & $C_1$ & $D_1$ & $pc_1$ & $C_2$ & $D_2$ & $pc_2$ & $C_3$  & $D_3$ & $pc_3$ & {\footnotesize Non locally optimal players}\\
\hline \hline
$[p_1,p_2,p_3]$    &   4   &  10   &   14   &  4    &   7   &  11    &   4    &   7   &  11     & player 1      \\ \hline
$[p'_1,p_2,p_3]$   &   5   &   8   &   13   &  5    &   7   &  12    &   5    &   7   &  12     & player 2      \\ \hline
$[p'_1,p'_2,p_3]$  &   5   &   8   &   13   &  1    &  10   &  11    &   5    &   7   &  12     & player 3      \\ \hline
$[p'_1,p'_2,p'_3]$ &   5   &   8   &   13   &  1    &  10   &  11    &   1    &  10   &  11     & player 1      \\ \hline
$[p_1, p'_2,p'_3]$ &   2   &  10   &   12   &  2    &  10   &  12    &   2    &  10   &  12     & player 2      \\ \hline
$[p_1, p_2, p'_3]$ &   3   &  10   &   13   &  4    &   7   &  11    &   2    &  10   &  12     & player 3      \\ \hline
$[p_1, p'_2, p_3]$ &   3   &  10   &   13   &  2    &  10   &  12    &   4    &   7   &  11     & player 2      \\ \hline
$[p'_1, p_2, p'_3]$&   5   &   8   &   13   &  5    &   7   &  12    &   1    &  10   &  11     & player 2      \\ \hline
\end{tabular}
\end{center}

\section{Additional Proofs of Section \ref{section:sum-bucket-stability}}

\begin{lemma}
\label{lemma:sum-bucket-greedy}
If a greedy move by player $i$ takes
a routing $\p$ to
a new routing $\p'$, then
$\p' < \p$.
\end{lemma}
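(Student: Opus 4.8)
The plan is to reuse the scheme of Lemma~\ref{theorem:max-greedy}: set $k_1 = pc_i(\p)$ and $k_2 = pc_i(\p')$, so that $k_2 < k_1$ since the move is greedy, and exhibit $j = k_1$ as the witness for $M(\p') < M(\p)$; that is, I would prove that $m_k(\p') = m_k(\p)$ for every $k > k_1$ and $m_{k_1}(\p') < m_{k_1}(\p)$. The path of player $i$ moves from cost class $k_1$ (in $\p$) to cost class $k_2$ (in $\p'$), which already supplies a $-1$ to $m_{k_1}$; so the whole argument reduces to controlling what the move does to the cost $pc_\ell = \C_\ell + \D_\ell$ of each bystander $\ell \neq i$. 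This is exactly where sum-bucket games differ from max games: a bystander's normalized congestion $\C_\ell$ can both rise and fall, because player $i$ may \emph{join} the bucket containing $p_\ell$, so one must bound these fluctuations rather than appeal to monotonicity.

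The heart of the proof is a pair of bounds on a bystander's cost change. First, if $pc_\ell(\p') > pc_\ell(\p)$, then $pc_\ell(\p') \leq k_2$: since $\D_\ell$ is unaffected (player $\ell$ does not move), $\C_\ell$ must have risen, and the only edges whose bucket-$B(p_i')$ congestion rises are edges of $p_i'$, with no congestion rising in any other bucket; hence the new maximum of $\C_\ell$ is attained at some $e \in p_i'$ with $B(p_\ell) = B(p_i')$, and because $p_\ell$ and $p_i'$ share this bucket, $\C_{e,p_\ell}(\p') = \C_{e,p_i'}(\p') \le \C_i(\p')$ and $\D_\ell(\p') = 2^{B(p_i')+1}-1 = \D_i(\p')$, so $pc_\ell(\p') \le \C_i(\p') + \D_i(\p') = k_2$. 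Symmetrically, if $pc_\ell(\p') < pc_\ell(\p)$, then $pc_\ell(\p) \leq k_1$: a drop in $\C_\ell$ forces every edge realizing $\C_\ell(\p)$ to lose normalized congestion, which can only happen on an edge $e \in p_i$ with $B(p_\ell) = B(p_i)$, and then $\C_\ell(\p) = \C_{e,p_i}(\p) \le \C_i(\p)$ and $\D_\ell(\p) = \D_i(\p)$, giving $pc_\ell(\p) \le k_1$. The possibility $B(p_i) = B(p_i')$ (the mover stays in its bucket) is handled automatically, since the edges in $p_i \cap p_i'$ undergo no net change of congestion.

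Granting these two bounds, the conclusion is immediate. A bystander with $pc_\ell(\p) > k_1$ can neither rise (that would land it at $\le k_2 < k_1$) nor fall (its old cost would be $\le k_1$), so it is frozen; a bystander with $pc_\ell(\p) \le k_1$ that changes cost lands at $\le k_2 < k_1$; and player $i$ goes from $k_1$ to $k_2 < k_1$. Hence for every $k > k_1$ no path enters or leaves cost class $k$, so $m_k(\p') = m_k(\p)$, while in class $k_1$ nothing enters and player $i$ (at least) departs, so $m_{k_1}(\p') < m_{k_1}(\p)$; thus $\p' < \p$. I expect the genuine work to be entirely in establishing the two displayed bounds — carefully identifying which edge–bucket congestions the move alters and checking the coincident-bucket case — after which the comparison of routing vectors, and the resulting convergence of best-response dynamics (finitely many routings, exactly as for max games), are routine.
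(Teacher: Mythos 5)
Your proof is correct and follows essentially the same route as the paper's: order routings by the cost-class vector and show that any bystander whose cost rises must owe its new maximum congestion to an edge of $p_i'$ in the same bucket, hence ends with cost at most $pc_i(\p') $, so the vector strictly decreases at the mover's old cost class. Your symmetric bound on bystanders whose cost falls (that their old cost was at most $pc_i(\p)$) is a small addition the paper leaves implicit, and it tidies up the final lexicographic comparison, but it does not change the substance of the argument.
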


\begin{proof}
Let $p_i$ and $p'_i$ denote the paths of $i$
in routings $\p$ and $\p'$, respectively.
Let $pc_i(\p) = \C_i(\p) + \D_i(\p) = z_1$
and
$pc_i(\p') = \C_i(\p') + \D_i(\p') = z_2$.
Since player $i$ decreases its cost in $\p'$, $z_2 < z_1$.
Consider now the vectors of the routings
$M(\p) = [m_1,\ldots,m_r]$
and $M(\p') = [m'_1,\ldots,m'_r]$.
We will show that $M(\p) < M(\p')$.

Let $B(p_i) = k_1$ and $B(p'_i) = k_2$.
Let $V$ denote the set of players whose cost increases in $\p'$
with respect to their cost in $\p$.
Next, we show that for any $j \in V$ it holds that $pc_j(\p') \leq c_i(\p')$,
which will help us to prove the desired result.

Let $p_j$ be the path of $j \in V$.
Let $B(p_j) = k_3$
(note that $j$ does not switch paths and buckets between $\p$ and $\p'$).
If $k_3 \neq k_1$ and $k_3 \neq k_2$ then the cost of $j$ remains
unaffected between $\p$ and $\p'$.
Thus, either $k_3 = k_1$ or $k_3 = k_2$.
If $k_3 = k_1$, then the cost of $j$ can only decrease
from $\p$ to $\p'$, since path $p_i$ can no longer affect path $p_j$.
Therefore, it has to be that $k_3 = k_2$.
Suppose, for the sake of contradiction, that $pc_j(\p') > pc_i(\p')$.
Then, $\C_j(\p') + \D_j(\p') > \C_i(\p') + \D_i(\p')$.
Since both paths are in the same bucket,
$\D_j(\p') = \D_i(\p')$, which implies that $\C_j(\p') > \C_i(\p')$.
Since $j \in V$, $\C_j(\p) < \C_j(\p')$.
The increase in normalized congestion of $p_j$ in $\p'$
can only be caused by $p'_i$ because it shares an edge with $p_j$
with congestion $\C_j(\p')$.
However, this is impossible since it would imply that $\C_i(\p') \geq \C_j(\p')$.
Therefore, $pc_j(\p') \leq pc_i(\p')$.

Consequently, in vector $\p'$ all the entries in positions
$j_2+1, \ldots, r$ do not increase with respect to $\p$.
Further, because $i$ switches paths, $m_{j_1} > m'_{j_2}$.
Thus, $M(\p) < M(\p')$, as needed.
\end{proof}

\section{Additional Proofs of Section \ref{section:sum-bucket-anarchy}}

\begin{lemma}
\label{theorem:f-at-least-one}
In Nash-routing $\p$, for every player $i$ and edge $e$ it holds that $|f(e,i)| \geq 1$.
\end{lemma}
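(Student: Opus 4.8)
The plan is to derive the bound $|f(e,i)| \geq 1$ from the local optimality of player $i$ in the Nash-routing $\p$, by comparing $p_i$ against the deviation that replaces it with the coordinated path $p^*_i$. It suffices to treat the case $e \in p_i$, i.e.\ $i \in \Pi_e(\p)$, which is the only situation in which $f(e,i)$ enters the definition of $f(e)$ and the later arguments (Lemmas~\ref{theorem:sum-bucket-expansion} and~\ref{theorem:sum-bucket-upper-bound}). For such $e$ we have $\C_i(\p) = \C_{p_i}(\p) \geq \C_{e,p_i}(\p)$, and since the bucket index is nonnegative, $\D_i(\p) = 2^{B(p_i)+1}-1 \geq 1$. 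Hence $pc_i(\p) = \C_i(\p)+\D_i(\p) \geq \C_{e,p_i}(\p)+1$.

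Next I would bound the cost player $i$ would pay in the routing $(p^*_i;\p_{-i})$. The normalized length of $p^*_i$ is $2^{B(p^*_i)+1}-1$, independent of the routing, and this equals $\D_{p^*_i}(\p^*) \leq \D^*$. For the normalized congestion, note that passing from $\p$ to $(p^*_i;\p_{-i})$ moves only player $i$, so for every edge $e' \in p^*_i$ the number of bucket-$B(p^*_i)$ paths through $e'$ grows by at most one (it grows by exactly one unless $p_i$ already lay in bucket $B(p^*_i)$ and used $e'$, in which case it is unchanged). Taking the maximum over the (nonempty) edge set of $p^*_i$ gives $\C_i(p^*_i;\p_{-i}) \leq \max_{e'\in p^*_i}\C_{e',p^*_i}(\p)+1$, and therefore $pc_i(p^*_i;\p_{-i}) \leq \max_{e'\in p^*_i}\C_{e',p^*_i}(\p) + 1 + \D^*$.

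Then I invoke the Nash condition $pc_i(\p) \leq pc_i(p^*_i;\p_{-i})$ and chain the two estimates:
\[
\C_{e,p_i}(\p)+1 \;\leq\; pc_i(\p) \;\leq\; pc_i(p^*_i;\p_{-i}) \;\leq\; \max_{e'\in p^*_i}\C_{e',p^*_i}(\p) + 1 + \D^*,
\]
which rearranges to $\max_{e'\in p^*_i}\C_{e',p^*_i}(\p) \geq \C_{e,p_i}(\p) - \D^*$. Choosing $e'$ to be an edge of $p^*_i$ attaining this maximum, we get $\C_{e',p^*_i}(\p) \geq \C_{e,p_i}(\p) - \D^*$, so $e' \in f(e,i)$ by definition, and hence $|f(e,i)| \geq 1$.

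The substitutions into the local-optimality inequality are routine; the point deserving care is the ``$+1$'' accounting for the normalized congestion after the deviation — one must verify that moving player $i$ into bucket $B(p^*_i)$ raises the bucket-restricted congestion of each edge of $p^*_i$ by at most one, regardless of which bucket $p_i$ occupied — together with the observation that $\D^*$ dominates the normalized length of $p^*_i$ itself, not merely the normalized lengths realized by the paths of $\p$. With these in hand the conclusion is immediate.
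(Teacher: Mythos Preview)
Your proof is correct and follows essentially the same argument as the paper's: compare the current cost $pc_i(\p)$ with the cost of deviating to $p^*_i$, using $\D_{p^*_i}\le\D^*$ and the ``$+1$'' congestion increase from the switch. The only cosmetic differences are that you argue directly while the paper argues by contradiction, and that you make explicit the restriction $e\in p_i$ (equivalently $i\in\Pi_e(\p)$), which the paper's lemma statement omits but its proof also relies on via the inequality $\C_{e,p_i}(\p)\le\C_i(\p)$.
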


\begin{proof}
Suppose that $|f(e,i)| = 0$.
Then for every edge $e' \in p^*_i$ it holds that $\C_{e',p^*_i}(\p) < \C_{e,p_i}(\p) - \D^*$.
Let $\C' = \max_{e' \in p^*_i} \C_{e',p^*_i}(\p)$.
If player $i$ was to choose path $p^*_i$ its cost would be:
$pc'_i \leq \C' + 1 + \D_{p^*_i} < \C_{e,p_i}(\p) - \D^* + 1 + \D^*
= \C_{e,p_i}(\p) + 1 \leq \C_i(\p) + \D_i(\p) = pc_i(\p).$
Thus, $pc'_i < pc_i(\p)$ which implies that player $i$ is not locally optimal in Nash-routing $\p$,
a contradiction.
\end{proof}

\paragraph{Proof of Theorem~\ref{theorem:sum-bucket-anarchy}:}
\hfill \break
%
%
Suppose that $\p$ is the worst Nash-routing
with maximum social cost.
We have that
$
PoA \leq {SC(\p)} / {SC(\p^*)} \leq {(\C + \D)} / {(\C^* + \D^*)}.
$
We examine two cases:
\begin{itemize}
\item
$\C > \D/4$:
In this case $\D = O(\C)$.
From Lemma \ref{theorem:sum-bucket-upper-bound},
$\C \leq 18 \C^* \cdot \D^* \cdot \lg^2 n$.
Therefore:
$
PoA
= O( {\C } / {(\C^* + \D^*)})
= O( {(\C^* \cdot \D^* \cdot \lg^2 n )} / {(\C^* + \D^*)}).
$

\item
$\C \leq \D/4$:
Let $p_i \in \p$ be the path with maximum cost in $\p$.
Clearly, $pc_i(\p) = \C_i + \D_i \geq \D$.
Further,
$0 \leq \C_i(\p) \leq \C \leq \D / 4$.
Thus, $\D_i \geq \D - \C_i \geq \D - \D / 4 = 3 \D /4$.
Since $\C_i \geq 0 = \C - \C$,
Lemma \ref{theorem:length-congestion} gives
$|p_i| \leq |s_i| + \C + 1 \leq \D^* + \D / 4 + 1$.
We have that $\D_i /2  < |p_i|$.
Therefore,
$\D_i /2 < \D^* + \D / 4 + 1$,
which gives:
$3 \D / 8 < \D^* + \D / 4 + 1$.
Thus, $\D < 8( \D^* + 1 ) \leq 16 \D^*$.
In order words, $\D = O(\D^*)$.
Since, $\C = \D$, we obtain:
$PoA
= O( {\D } / {(\C^* + \D^*)} )
= O( {\D^*} / {(\C^* + \D^*)}) = O(1).
$
\end{itemize}
By combining the two above cases we obtain the desirable result.
\qed

\end{document}